\documentclass[letterpaper,12pt,margin=0.5in]{article}
\usepackage[letterpaper, top=1.6in, bottom=1.6in,left=1in,right=1in]{geometry}
\usepackage[utf8]{inputenc}
\usepackage{amsmath,amsthm,amssymb,array,xcolor,multicol,verbatim,mathpazo}
\usepackage{hyperref}
\usepackage{tikz}
\usepackage{subcaption}
\usepackage{cleveref}
\usepackage{amssymb}
\usepackage{pgfplots}
\usetikzlibrary{positioning,angles}
\usetikzlibrary{arrows}\usetikzlibrary{shapes.multipart}

\usepackage{forest}
\usepackage[show]{ed}
\usepackage[
style=authoryear,
]{biblatex}
\usetikzlibrary{patterns,positioning,shapes.geometric}
\usepackage{graphicx}
\usepackage{enumitem}
\usepackage{algorithm}
\usepackage[noend]{algpseudocode}

\usetikzlibrary{calc}

\tikzset{axis line style/.style={thin, gray, -stealth}}
\usepackage{adjustbox}
\usepackage{array}

\usepackage{macros}

\newtheorem{thm}{Theorem}

\newtheorem{prop}{Proposition}
\newtheorem*{fact}{Observation}

\newtheoremstyle{named}{}{}{\itshape}{}{\bfseries}{.}{.5em}{\thmnote{#3}}

\theoremstyle{definition}
\newtheorem{ex}{Example}

\newtheorem{defn}{Definition}

\makeatother
\newcounter{parentnumber}
\usepackage{amsmath}
\usepackage{newtxtext,newtxmath}
\DeclareSymbolFont{CMletters}{OML}{cmm}{m}{it}
\DeclareMathSymbol{v}{\mathord}{CMletters}{`v}
\DeclareMathOperator*{\argmax}{arg\,max}

\newtheorem{cor}{Corollary}
\title{Purification and Perturbations of Communication and Repeated Games}
\date{Updated: \today
\\\href{https://drive.google.com/file/d/1UZBn1w8nszj_posHj9AtPISpueDrl8WP/view?usp=sharing}{Click here} to access current version
}
\author{Alistair Barton\thanks{Cambridge University, Email: \href{mailto:ab3315@cam.ac.uk}{ab3315@cam.ac.uk}\\
I thank Joyee Deb, Dilip Abreu, Ariel Rubinstein, Kirtivardhan Singh, and Diego Cussen for productive feedback on this project.
}}
\begin{document}
\maketitle
\begin{abstract}
I prove that it is irrational for agents with even slightly private preferences to condition their strategy on private information that is payoff-irrelevant to them, contrary to powerful techniques for analyzing communication and repeated games. In repeated games with public+private monitoring, this means all pure equilibria are perfect public equilibria, and non-trivial belief free equilibria do not exist. In a wide class of communication games (up to allowing receiver commitment), this means persuasion is impossible with state-independent preferences. Nevertheless, these analytic techniques may be made compatible with private preferences through perturbation approaches: considering either payoff-relevance of the private information or correlation between parties' information. An example of the latter occurs by introducing `atonement' to repeated games equilibria.

\end{abstract}

Game theory typically assumes the preferences are public; however, in reality preferences are, to some extent, private. Even in the relatively simple setting of monetary lotteries, it is not always predictable how an agent will compare two lotteries, as this valuation varies heterogeneously across agents depending on their risk preferences (and cognitive noise).

Neglecting such concerns is justified by Harsanyi's purification theorem (\cite{H73}), which states that slight privacy does not significantly disturb the set of Nash equilibria in generic games. Such privacy will give almost every individual a strict best response, thus affecting their best response in mixed equilibria, aggregating across preferences the distribution of actions will approximate the original equilibrium.

In this paper we focus on a commonly studied and popular (but still measure-0) set of games that are an exception to Harsanyi's purification theorem: even slight (private) heterogeneity in agent preferences significantly disturbs the set of equilibria. This includes both communication games with state-independent preferences (e.g. \cite{CH10,LR20,CD23,HL25}),\footnote{Related work by \cite{DK21} will be discussed in the literature review.} and repeated games --- with severe implications on the structure of equilibria when monitoring is imperfect.


We study a simple mechanism design model with state-independent preferences which nests communication and certain equilibria of repeated games. State-independent preferences significantly aid the tractable analysis of equilibria, but directly leads to fragility. Our main result can be informally stated as a principle that holds under a weak conditions on agents' private preferences called prevalence:
\begin{quote}
    \textbf{Independence of Irrelevant Information (III) (\cite{M91}):} Agents do not condition their strategy on information unless (i) it is payoff-relevant to \textit{them}, or (ii) it is informative of others' strategies. 
\end{quote}
While this is an intuitive property for equilibria, powerful and tractable results can be obtained by violating it --- we refer to such violations as `abusing indifference'. 

Note that the information referred to in III may be payoff-relevant to other agents, as in the following example demonstrating the power and fragility of abusing indifference:
\begin{ex}[Money-Burning as Persuasion]\label{ex:mb}
    A seller recommends a product from their inventory $\C:=\{c_0,\dots, c_N\}$ to a consumer. The seller privately observes the state $s\in S$ generated from prior $\mu\in\Delta S$ with full support, before making a recommendation $a=(a_c,a_m)$. The recommendation takes the form of a suggested product $a_c\in \C$ accompanied with an amount of money $a_m\ge 0$ that is publicly `burnt'. Based on the message $(a_c,a_m)$, based on the recommendation the consumer either chooses a product $a_c$ or walks away (represented by the `product' $c_0$).

    The seller's preference is state-independent, only considering the profit from the interaction, given by 
    $$
    u(c,a_c,a_m)=\pi_C(c)-a_m
    $$
    where $\pi_C(c)$ is the profit from selling product $c$, assumed to be positive: $\pi_C(c)\ge\pi_C(c_0)=0$. The buyer's \textit{ex post} utility, denoted $u_R(c\rvert s)$, depends on both the state and the product consumed. We assume that the buyer's best response is not constant over states, and normalize their \textit{ex ante} utility from babbling, $\max_c \{\int u_R(c\rvert s)\,d\mu(s)\}=0$, and their first-best utility, $\int \max_c\{u_R(c\rvert s)\}\,d\mu(s)=1$. Before the seller makes any recommendation, the consumer commits to a strategy $\hat{c}$ determining which product to purchase given each recommendation $a=(a_c,a_m)$.

    For the seller's strategy to be optimal, equilibrium recommendations must all result in the same total profit $u$. This occurs if the buyer commits to only obey a suggestion $a_c$ if $a_m=\pi(a_c)$, and otherwise purchases $c_0$. In this case every recommendation strategy for the seller is optimal. In particular it is optimal for the seller to fully reveal the state to the buyer, allowing them their first best utility $1$; however, it is also a best response for the seller to send a constant message. By interpolating between the two, the set of equilibrium buyer utilities is $[0,1]$.\footnote{The buyer cannot do worse than $0$ in any equilibrium, otherwise they would commit to the constant strategy that is their \textit{ex ante} best response. The result stated here can be obtained without commitment power. In Corollary \ref{cor:m.b} we return to this example, stating a stronger version of this result that does require commitment power.}

    Now suppose the seller's profit $\pi_C$ from each sale is private, distributed according to a density on $\R^\C$ (ie. the buyer does not know the mark-up on different products). For a fixed buyer strategy, the seller will then (w.p. 1) have a strictly preferred recommendation, independent of the state, which they will always recommend. Thus no information can be communicated in equilibrium and the buyer can do no better than their babbling utility $0$.
\end{ex}

In a world with private preferences, behaviour violating III is irrational. However, the goal of this paper is not to dismiss the powerful techniques that violate III, but rather present re-interpretations of these techniques that are consistent with private preferences. As a brief summary this paper argues:
\begin{enumerate}
    \item (Abusing Indifference) Behaviour violating III is not compatible with private preferences (assuming these preferences preserve payoff-irrelevance of the information).
    \item (Using Indifference) Equilibrium constructed by violating III may be made compatible with private preferences by perturbing either the correlation structure or payoff-relevance of information in a compatible manner. Beyond grounding this equilibrium behaviour, such augmentations can improve our understanding:
    \begin{enumerate}
        \item They help uncover forces necessary for supporting desired equilibrium,
        \item They extend analysis from Abusing Indifference beyond the measure-0 set of unperturbed models, to a large class of models where III cannot be directly applied (e.g. information is significantly payoff-relevant).
    \end{enumerate}
\end{enumerate}
Section \ref{sec:resolutions} contains examples and further discussion of how to interpret these equilibria.

In Section \ref{sec:app.rg} we study the implications of III in repeated games with imperfect monitoring (specifically monitoring that is a combination of a public and a private, conditionally independent, signal). We explore three types of perfect bayesian equilibria: Perfect Public Equilibria (PPE), Belief-Free Equilibria (BFE), and Belief-Based Equilibria (BBE). III implies that, for prevalent private utilities:
\begin{figure}\begin{center}
\begin{tikzpicture}[scale=1]
    \draw[color=black] (210:1.5) circle (2) node[below left,align=center]{Private\\ Pref.}; 
    \draw[color=black] (330:1.5) circle (2) node[below right,align=center]{Pure\\ Strategy};
    \draw[color=black] (90:1.5) circle (2) node[above, align=center]{Noisy Mon.};
    \node at (0,0) {PPE};
    \node at (1.04,0.6) {BFE+};
    \node at (-1.04,0.6) {BBE};
\end{tikzpicture}
\end{center}
    \caption{Our main result for repeated games shows how the introduction of private preferences interacts with noisy (public+private) monitoring to affect pure and non-pure equilibria (N.B. pure equilibria include public randomization), as summarized by this Venn Diagram. PPE are public perfect equilibria, BBE are belief-based equilibria, and BFE+ includes belief-free equilibria.}
    \label{fig:venn}
\end{figure}
\begin{itemize}
    \item all pure perfect bayesian equilibria are PPE (where pure means agents do not condition their action on their private preference type),
    \item non-trivial belief-free equilibria do not exist,
    \item any outcome that occurs w.p. 1 in a game with private monitoring must be myopically optimal.
\end{itemize}
The first two results are illustrated in Figure \ref{fig:venn}. 

PPE can be inefficient, some criticisms are that they force agents to neglect private information, impede efficient renegotiation proof equilibria, and lead to inefficient off-path play (as arise in cases of miscoordination, trembling-hand play, or large uncertainty about private preferences). This is because PPE preclude atonement: conditional on the same realization of the public signal, an agent that is guilty of deviating cannot act any different from an innocent agent. A particular application of this is the impossibility of symmetric renegotiation proof PPE when monitoring is anonymous. 

We discuss how Belief-Based Equilibria can be more efficient than PPE by introducing `randomization', and suggest that they can be tractably understood in settings with deterministic (imperfect) monitoring.
\subsubsection*{Related Literature}
Different instances of private sunspot arguments have appeared before. Our main contribution is to formally generalize, extend, and unify these critiques, allowing us to more broadly study the domain of models to which this criticism does/does not apply:

The principle of III was first described by \cite{M91} as an intuitive equilibrium criterion in repeated games with only private monitoring (not connected to purification), applied to obtain a result that pure equilibria are stage Nash. \cite{BMM09} (implicitly) apply the III principle recursively to sequential-move games with perfect information and finite memory to demonstrate that all purifiable equilibria are markov perfect equilibria. Similarly, \cite{DK21} apply this result to the cheap talk model of \cite{CH10} to demonstrate that only babbling equilibria of this model are compatible with private preferences. The main result of this paper more generally connects the III principle to private preferences, generalizing these previous results.

The novel applications in this work (to my knowledge) are to repeated games with public monitoring (as well as public+private monitoring, and perfect monitoring), and to more general communication games including mediation or receiver commitment. By outlining the full scope of this critique, we also demonstrate the necessary relaxations to obtain purifiable equilibria.

Another related paper is \cite{BMM08}, which studies purification in repeated games with perfect monitoring and 1-period memory. They do not use a private sunspot argument, and indeed they find that non-trivial equilibria can be compatible with private preferences. However, equilibrium behaviour is sensitive to the distribution of preferences (even in a neighbourhood of a public preference). This describes a failure of Harsanyi purification distinct from the failure we observe --- being about the sensitivity rather than the non-existence of non-trivial equilibria with private preferences.\footnote{As an analogy, imagine preferences are loosely represented by a non-injective matrix $A_0:\mathbb{R}^n\rightarrow \mathbb{R}^m$ and non-zero vector $b_0\in \text{im}(A_0)$, and non-trivial equilibria are represented by the (continuum) solutions $x^\ast(A_0,b_0)$ of $A_0x=b_0$. The \cite{BMM08} result can be compared to the case where $m=n$, so that generic perturbations to $A_0$ reduce the solution set to a unique $x^\ast$ which depends on the direction of the perturbation (ie. $(A,b)\mapsto x^\ast(A,b)$ is not lower-hemicontinuous at $(A_0,b_0)$). The private sunspot principle is comparable to the case $m>n$ where generic perturbations to $(A_0,b_0)$ result in \textbf{no} non-trivial equiliria.}\section{Base Model}
We work with a maximally simple single-agent mechanism design problem where our question is which strategies the designer can incentivize in the agent. Richer environments (including both communication and repeated games) can be constructed from this model by imposing additional equilibrium constraints.

\begin{figure}
    \centering
    \begin{tikzpicture}[every node/.style={circle, text = black,minimum size=8.5mm},
    dots/.style={circle, text = black,minimum size=8.5mm}]
    \def\Tick{0.2}
    \def\displace{0.6}
        \draw[->,ultra thick] (0,0) -- (12,0);
        \draw (0,-\Tick) -- (0,\Tick);
        \node at (0,\displace) {Designer chooses};
        \node at (0,-\displace) {$\hat{c}:\A \rightarrow \Delta \C $};
        \draw (4,-\Tick) -- (4,\Tick);
        \node at (4,\displace) {Agent chooses};
        \node at (4,-\displace) {$\hat{a}:S\times\Omega\rightarrow \Delta \A $};
        \draw (8,-\Tick) -- (8,\Tick);
        \node at (8,\displace) {Nature chooses};
        \node[align=center] at (8,-\displace) {$\omega\sim\mathbb{P}$\\ $s$};
        \draw (12,-\Tick) -- (12,\Tick);
        \node at (12,\displace) {Nature chooses};
        \node[align=center] at (12,-0.8) {$a\sim \hat{a}(s\rvert\omega)$\\$c\sim \hat{c}(a)$};
    \end{tikzpicture}
    \caption{The timing of the game, the designer first commits to a mechanism $\hat{c}:\A \rightarrow \Delta\C $, the agent chooses a contingent strategy $\hat{a}:S\times\Omega\rightarrow \Delta \A $. The payoff for players in each state $s\in S$ is then determined by the realization of $\omega\sim\mathbb{P}$, $s$ $m\sim \hat{a}(s)$ and $c\sim\hat{c}(a)$.}
    \label{fig:timing}
\end{figure}
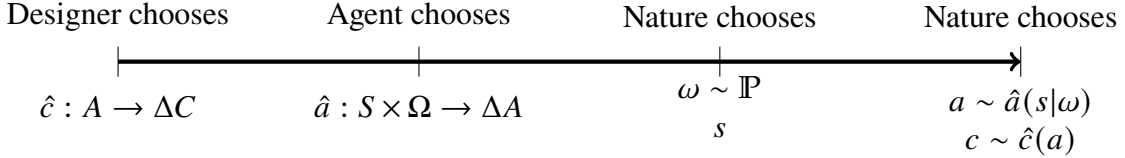

The agent chooses an action $a$ based on the state $s$ and their idiosyncratic type $\omega$. The designer, who does not observe the state, wants to construct consequences $c$ for different actions to incentivize the agent to condition their action on the state $s$. The state $s$ is not payoff-relevant to the agent (to them, $s$ is effectively a private sunspot), while $\omega$ is.

Formally, the designer begins by committing to a mechanism $\hat{c}:\A\rightarrow \Delta\C$ mapping actions to consequences. The agent chooses a strategy $\hat{a}:S\times\Omega\rightarrow \Delta\A$ mapping the state and their idiosyncratic type $\omega$ to actions. Nature then draws the type $\omega$ from a distribution $\mathbb{P}$ and draws the state $s$ arbitrarily\footnote{One may fix a prior distribution $\mu$ over states for communication games. We leave the distribution unspecified for applications to repeated games, where the state reflects agents observations of signals, and hence has an endogeneous distribution.}. The agent's action $a$ is the drawn from $\hat{a}(s\rvert\omega)$, and the consequence $c$ from $\hat{c}(a)$, determining the agent's utility.

The agent's utility for a sequence of play $(\omega,s,a,c)$ is state-independent, given by $u(a,c\rvert\omega)$\footnote{This is in the fashion of the perturbed payoffs of \cite{H73}.}. Our results pertain to the cases where at least one of $\A$ and $\C$ is either finite or finite-dimensional. We assume $(a,c)\mapsto u(a,c\rvert\omega)$ is continuous, and view the utility as a random function drawn from the set of utility functions $\mathbb{U}:=\mathcal{C}^0(\A \times \C )$ --- where $\mathcal{C}^0(X)$ denotes the set of continuous functions  $X\rightarrow\mathbb{R}$.

We say the agent's utility is \textbf{public} if $u(a,c\rvert\omega)\equiv u^0(a,c)$ is type-independent --- in this case the designer can calculate the utility of an outcome to the agent based on $(a,c)$.

The agent's problem is to choose an action strategy $\hat{a}$ that, given $\hat{c}$, solves their \textbf{incentive compatability} constraint: every message sent in equilibrium should maximize their utility
\begin{equation}\label{eq:Abuse.IC}
    \supp{\hat{a}(s\rvert\omega)}\subseteq\argmax_{a\in \A } u(a,\hat{c}(a)\rvert \omega)\qquad \text{for all }s,\omega.
\end{equation}
We will see that if the agent's utility is public, then nearly every strategy can be made incentive compatible through an appropriate choice of $\hat{c}$ --- a fact that has been used by previous papers to tractably study equilibria in this setting --- while it becomes impossible for state-dependent strategies to be incentive compatible in settings with private preferences.

\subsection{Model Extensions}
The equilibria of this model describes a `feasibility set' of outcomes in this setting, by including additional incentives and constraints on the designer we obtain the equilibria in various settings as follows. 

Our model closely describes \textbf{communication games with receiver commitment}. In this case, the agent (sender) observes a state drawn from the prior $\mu\in\Delta S$ and the designer (receiver) has an \textit{ex post} utility function $u_R:\A\times\C\times S\rightarrow \mathbb{R}$. The receiver commits to a mechanism $\hat{c}$ maximizing
\begin{subequations}
\begin{equation}
    \int u_R\left(\hat{a}_{\hat{c}}(s),\hat{c}\circ\hat{a}_{\hat{c}}(s)\rvert s\right)\,d\mu(s),
\end{equation}
where $\hat{a}_{\hat{c}}$ is the strategy chosen by the sender after observing the mechanism $\hat{c}$, constrained to solve eq. \ref{eq:Abuse.IC}. Note that we do not assume the sender breaks ties in favour of the receiver.

In \textbf{communication games} (without commitment), the state is distributed according to a prior $\mu\in\Delta S$, and the receiver is constrained by the \textit{interim} incentive compatability constraint: 
\begin{equation}\label{IC:cheaptalk}
    \supp{\hat{c}(a)}\subseteq \argmax_{c\in \C } \int u_R(a,c\rvert s)\, d\nu_a(s)=:C^\ast(\nu_a,a).
\end{equation}
where $\nu_a\in \Delta S$ the Bayesian posterior belief conditional on the action (`signal') $a$. In general this signal can be payoff relevant to the sender and/or receiver; in \textbf{cheap talk} actions are freely chosen messages that are payoff-relevant to neither player.

\textbf{Mediated} and \textbf{long cheap talk} can also be described in this manner, detailed in Appendix \ref{app:comm.g}.
\end{subequations}
\subsubsection*{Repeated Games} 
In \textbf{repeated games} with $N$ players, each player $i$ chooses an action $a_i\in \A_i$ to take in the current stage game based on the previous signals they have observed $s$. The mechanism $\hat{c}:A\rightarrow \mathbb{R}^N $ describes how the current action profile $a\in \A:=\bigtimes_{i\in N} \A_i$ affects continuation payoffs. In an infinitely repeated game, $\C\subseteq \R^N$ is the set of possible continuation payoffs. With discount rate $\delta$, preferences are described by
$$
    u_i(a;\hat{c}(a))=(1-\delta)v_i(a)+\delta \hat{c}_i(a).
$$
where $v_i$ is the payoff from the stage game and $\delta$ is the discount factor. 

For a set $C^\ast\subseteq \C$ of continuation payoffs, the set of rationalizable action profiles is
\begin{subequations}
\begin{equation}
\begin{split}
    \A^\ast(C^\ast):=\Big\{a^\ast\in \A;a_i^\ast \in \argmax_{a_i\in \A_i} u_i(a_i, a_{-i}^\ast;\hat{c}(a_i,a_{-i}^\ast))\quad \forall i\in N,\exists\hat{c}:A\rightarrow C^\ast\Big\},
\end{split}
\end{equation}
that is, the set of actions such that there is a continuation payoff mapping into $C^\ast$ that solves eq. \ref{eq:Abuse.IC} for every $i$.\footnote{Imperfect monitoring imposes additional constraints on $a^\ast,\hat{c}$. The $a^\ast_{-i}$ must be replaced by the expected action of other agents given the signals observed by agent $i$, while if $G(ds\rvert a)$ is the distribution of signal profiles associated with the action profile $a$, then $\hat{c}(a)$ must be of the form $\int \tilde{c}(s) G(ds\rvert a)$ where $\tilde{c}(s)$ is the continuation payoffs after signal profile $s$. }

With public randomization, the solution concept of subgame perfect equilibria then imposes the additional constraint that the relevant continuation payoffs $C^\ast$ can be attained from action profiles they rationalize $A^\ast(C^\ast)$, formally
\begin{equation}
    C^\ast\subseteq v(\A^\ast(C^\ast)).
\end{equation}

With \textit{perfect monitoring} of actions, where the state (history) is common knowledge, our results will only apply to specific equilibria where agents condition their strategy on disjoint information. The results will have more bite in settings with private information (e.g. imperfect public+private monitoring). The applications to repeated games will be elaborated in detail in Section \ref{sec:app.rg}.

\end{subequations}
\section{Abusing Indifference}\label{sec:thesis}
In all of the above settings, it is desirable for the agent to be able to condition their strategy on the sunspot $s$ --- this enables informative communication in communication games, and credible punishment in repeated games. 

Assuming preferences are public provides immense power to this end. This is because it is sufficient for the agent's problem (the right side of eq. \ref{eq:Abuse.IC}) to have multiple solutions, allowing them to change their action arbitrarily within this set of best responses depending on the state $s$.

Suppose the agent's preference is public, given by $u^0$, and we want to know what strategies are rationalizable when the designer constrains themselves to a compact set of consequences $C_c\subseteq \C$. Clearly, it is never rational for the agent to choose a strictly dominated action, so we restrict attention to strategies supported on the undominated actions $\A_u(C_c)$, formally
$$
    \A_u(C_c):=\Big\{a\in \A  ;\max_{c\in C_c} u^0(a,c)\ge \sup_{a'}\min_{c'\in C_c}u^0(a',c')\Big\}.
$$
The strength of transparent preferences is that this is the \textit{only} constraint on the strategies the designer can induce:
\begin{prop}\label{prop:pre.ex}
    For any compact set $C_c\subseteq\C$, there exists a mechanism $\hat{c}:\A\rightarrow C_c$ that rationalizes any strategy $\hat{a}:S\rightarrow \Delta \A_u(C_c)$.
\end{prop}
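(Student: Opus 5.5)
The plan is to construct a single mechanism $\hat{c}$ that makes every action in $\A_u(C_c)$ yield exactly the same utility $\bar{u}$ to the agent, while every other action yields at most $\bar{u}$. Then every strategy supported on $\A_u(C_c)$ satisfies eq.~\ref{eq:Abuse.IC} for every $s$, since each action in its support lies in the argmax. Because utility is public, $\hat{c}$ does not depend on the type, and the indifference holds simultaneously for all $s$.

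Define the punishment level $\bar{u}:=\sup_{a'\in\A}\min_{c'\in C_c}u^0(a',c')$; the inner minimum exists by compactness of $C_c$ and continuity of $u^0$. For $a\notin\A_u(C_c)$, choose $\hat{c}(a)\in\argmin_{c\in C_c}u^0(a,c)$. This gives $u^0(a,\hat{c}(a))\le\max_{c}u^0(a,c)<\bar{u}$, so these actions are strictly worse than $\bar{u}$. For $a\in\A_u(C_c)$, we have $\min_{c\in C_c}u^0(a,c)\le\bar{u}$ by definition of the supremum, and $\max_{c\in C_c}u^0(a,c)\ge\bar{u}$ by definition of $\A_u(C_c)$. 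We then want some $c\in C_c$ with $u^0(a,c)=\bar{u}$ exactly.

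This last step is the main obstacle, because $C_c$ need not be connected, so the intermediate value theorem fails. I would instead work with lotteries: the mechanism maps into $\Delta\C$ (as in the timing $\hat{c}:\A\to\Delta\C$), and utility is evaluated in expectation. Mix the minimizer $\underline{c}_a$ and the maximizer $\bar{c}_a$ with weight $\lambda_a\in[0,1]$ chosen so that
$$\lambda_a u^0(a,\bar{c}_a)+(1-\lambda_a)u^0(a,\underline{c}_a)=\bar{u}.$$
Such a $\lambda_a$ exists because the left side is affine in $\lambda_a$ and moves from a value $\le\bar u$ to a value $\ge\bar u$. The resulting lottery is supported in $C_c$, consistent with reading ``$\hat{c}:\A\to C_c$'' as $\Delta C_c$. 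If deterministic consequences are required, I would note that the claim needs $\{u^0(a,c):c\in C_c\}$ to be an interval, for example when $C_c$ is connected, and use the intermediate value theorem there.

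Two loose ends remain. First, the supremum $\bar{u}$ may not be attained, but this does not matter: for undominated actions we only used the inequalities above, and dominated actions receive strictly less than $\bar{u}$. Second, measurability of $a\mapsto\hat{c}(a)$ in the infinite case should follow from a measurable selection theorem applied to the argmin and argmax correspondences, which are upper hemicontinuous with compact values. Together these show that every action in $\A_u(C_c)$ attains the maximal payoff $\bar{u}$ under $\hat{c}$, so any $\hat{a}:S\to\Delta\A_u(C_c)$ is rationalized.
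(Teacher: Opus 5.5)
Your proposal is correct and follows the paper's proof. Both set every undominated action's payoff equal to $u^\ast=\sup_{a'}\min_{c\in C_c}u^0(a',c)$ by mixing the minimizing consequence $\underline{c}(a)$ with one yielding at least $u^\ast$, and both read that mix as a lottery supported in $C_c$. Your direct appeal to the definition of $\A_u(C_c)$ to get $\max_{c}u^0(a,c)\ge u^\ast$ is a bit cleaner than the paper's $\epsilon$-approximation detour. You also spell out explicitly that dominated actions receive strictly less than $u^\ast$, which the paper leaves implicit.
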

Such a mechanism $\hat{c}$ makes the agent indifferent over every undominated action.
\begin{proof}
    We construct a mechanism whereby each action $a\in \A_u(C_c)$ obtains utility
    $$ 
    u^\ast:=\sup_{a'}\min_{c\in C_c} u^0(a',c)
    $$
    Denote $\underline{c}(a)$ the minimizer of $c\mapsto u^0(a,c)$. For all $\epsilon>0$, there exists an action $a_\epsilon\in\A$ such that
    $$
    u^0(a_\epsilon,\underline{c}(a_\epsilon))>u^\ast-\epsilon.
    $$
    Because $a\in \A_u(C_c)$ cannot be dominated by any such $a_\epsilon$, it must be the case that there exists a consequence $\overline{c}(a)$ such that
    $$
    u^0(a,\overline{c}(a))\ge \lim_{\epsilon\rightarrow 0}u^0(a_\epsilon,\underline{c}(a_\epsilon))=u^\ast\ge u^0(a,\underline{c}(a)).
    $$
    Taking a convex combination of $\overline{c}(a)$ and $\underline{c}(a)$, it is then possible to make it so that an agent obtains a utility exactly equal to $u^\ast$ for such an action $a\in \A_u(C_c)$.
\end{proof}
A modest extension of this result in the context of Example \ref{ex:mb} shows how allowing the agent to burn money allows almost total flexibility in the correlation between states and consequences:
\begin{cor}\label{cor:m.b}
        Consider Example \ref{ex:mb}. Let $\hat{c}_s:S\rightarrow \Delta C$ be a map that gives the customer an expected utility of at least their reservation utility $0$. There exists an equilibrium that implements $\hat{c}_s$. 
\end{cor}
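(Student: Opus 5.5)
The construction has the seller reveal the state, and burnt money pins down which product is bought. A direct state-contingent map $s\mapsto a_c$ does not work, because $\hat{c}_s(s)$ may be a nondegenerate lottery over products. So the plan is to make the state fully revealing and let the seller's message fix the purchased product directly. Since the seller can pick any amount $a_m\ge 0$, I would use distinct money amounts as labels and choose each amount to absorb profit differences, exactly as in Proposition \ref{prop:pre.ex}.

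\textbf{Step 1: encode the target as a message strategy.} Let $\bar\pi:=\max_{c}\pi_C(c)$. In state $s$, the seller draws $c\sim\hat{c}_s(s)$ and sends $a=(c,a_m)$ with $a_m=\pi_C(c)$. This strategy alone does not identify $s$, but identifying $s$ is unnecessary: the consumer only needs to buy the product $c$ named in the message. Every message then yields seller profit $\pi_C(c)-\pi_C(c)=0$.

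\textbf{Step 2: the consumer's committed mechanism.} The consumer commits to $\hat{c}(a_c,a_m)=a_c$ if $a_m=\pi_C(a_c)$, and $c_0$ otherwise. Every message with $a_m\neq \pi_C(a_c)$ gives profit $-a_m\le 0$, so no deviation beats $0$. The seller is therefore indifferent across all equilibrium messages, and the mixed, state-dependent strategy from Step 1 satisfies eq. \ref{eq:Abuse.IC}. The induced joint distribution of $(s,c)$ is exactly $\hat{c}_s$.

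\textbf{Step 3: verify the consumer's side.} The consumer must prefer this commitment to any alternative, or at least not prefer babbling. This is where the hypothesis enters: the consumer's utility under $\hat{c}_s$ is at least $0$, the babbling value. If equilibrium is defined as the consumer optimizing over mechanisms, implementing an arbitrary $\hat{c}_s$ is not optimal in general. I therefore read ``implements'' as the weaker requirement that the consumer commits to $\hat{c}$, the seller best-responds, and participation holds against the babbling outside option; this matches the footnote to Example \ref{ex:mb}.

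\textbf{Main obstacle.} The tie-breaking issue is not a problem here, because the paper does not assume the sender breaks ties favourably. Instead, we select the seller's strategy from the indifference set. The real difficulty is justifying the consumer's side in Step 3. If needed, I would strengthen the construction as follows: pair off-path messages with the consumer's worst product, and invoke the reservation-utility bound so that the consumer has no profitable alternative commitment guaranteeing more than $0$ against adversarial tie-breaking.
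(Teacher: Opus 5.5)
Your Steps 1--2 are correct and match the paper's construction. Your pure mechanism, $\hat{c}(a_c,a_m)=a_c$ if $a_m=\pi_C(a_c)$ and $c_0$ otherwise, with the seller doing the randomizing, is a perfectly good substitute for the appendix's lottery-valued mechanism $\hat{a}(m_1,m_2)\in\{a\in\Delta A_c;\pi(a)=m_2\}$.

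The gap is in Step 3. In the receiver-commitment game, an equilibrium requires the consumer's commitment to be optimal given the seller's sequentially rational response to \emph{every} mechanism. The footnote to Example \ref{ex:mb} says this is exactly what makes the corollary stronger than the example. You assert that under this definition an arbitrary $\hat{c}_s$ is `not optimal in general', and you fall back on a participation-only reading of `implements'. As written, you therefore prove a weaker statement. The assertion is also false in this model, because the paper explicitly does not assume the seller breaks ties in the consumer's favour.

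The missing step is to specify the seller's behaviour after off-path mechanisms:
\begin{itemize}
\item For any alternative commitment $\hat{c}'$, the seller's payoff $\pi_C(\hat{c}'(a))-a_m$ is state-independent. So sending one fixed payoff-maximizing message in every state is a best response (whenever a maximizer exists), and hence a credible threat.
\item Against this threat the consumer's purchase does not vary with $s$, so they get at most $\max_c\int u_R(c\rvert s)\,d\mu(s)=0$.
\item Since $\hat{c}_s$ yields at least $0$, no deviation in commitment is profitable, and the profile is an equilibrium.
\end{itemize}
This is precisely the first paragraph of the paper's proof.

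Your closing mention of `adversarial tie-breaking' points in this direction. However, the fix you propose, pairing off-path \emph{messages} with the consumer's worst product, acts on the wrong object. What must be threatened is the seller's strategy after off-path \emph{mechanisms}, not the consequences of off-path messages within $\hat{c}$.
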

The mechanism that obtains this is the same described in the example, and constructed in the previous proof, namely choosing the agent's least preferred option $\underline{c}$ unless they burn a sufficiently high quantity of money to offset their gain from other actions. The agent is then indifferent over a set of messages capable of inducing every action, and can randomize in each state in a way that corresponds to $\hat{c}_s$. This reasoning carries forward to any version of our model that includes money-burning (in some cases requiring designer to commit to randomize after some actions).

Note that while there is an equilibrium that generates the receiver's optimal utility, it is not possible for them to use receiver commitment power to select this equilibrium: for any negotiation protocol it is always a best response for the sender to adopt a constant strategy earning the customer at most $u^\ast_R$.

\subsection{Fragility}\label{sec:antithesis}
The problem with Abusing Indifference is that it relies on the agent having the specific preference that is made indifferent by the proposed mechanism. If the agent has slightly different preferences, they may no longer be indifferent and thus find it optimal to always choose the same action $a$ regardless of the state $s$.

A strategy $\hat{a}:S\times\Omega\rightarrow \A$ is \textbf{uninformative} if the strategy is state-independent for almost every type $\omega$:
$$
    \hat{a}(s\rvert\omega)=\hat{a}(s'\rvert \omega)\qquad\text{ for a.a. $\omega$, for all $s,s'\in S$}.
$$
For an equilibrium strategy to be informative, it is necessary that a positive measure of types can be simultaneously made indifferent over multiple outcomes.

A specific mechanism $\hat{c}$ defines a set of possible outcomes in $A\times \Delta C$. Embedding this in $\Delta\A\times\Delta\C$ and taking the convex hull, we obtain a convex set $\Gamma_{\hat{c}}\subseteq \Delta\A\times\Delta\C$ such that an agent must be indifferent between multiple elements of $\Gamma_{\hat{c}}$ to adopt an informative strategy.\footnote{Formally, $\Gamma_{\hat{c}}:=\{(p,\hat{c}_\#p);p\in\Delta\A\}$, where $\hat{c}_\#p$ is the pushforward of the measure $p$ by $\hat{c}$.} 
\begin{fact}
    If there exists an informative equilibrium then there exists a convex set $\Gamma\subseteq \Delta \A  \times\Delta \C$ such that
\begin{equation}\label{eq:convex.imp}
    \P{\left|\argmax_{(a,c)\in \Gamma} u(a,c\rvert\omega)\right|\ge 2}>0.
\end{equation}
\end{fact}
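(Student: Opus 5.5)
The natural choice is $\Gamma:=\Gamma_{\hat{c}}$, the convex set generated by the equilibrium mechanism $\hat{c}$. Under this choice the Observation reduces to showing that, on a set of types of positive probability, the agent has at least two distinct maximizers in $\Gamma_{\hat{c}}$. Convexity of $\Gamma_{\hat{c}}=\{(p,\hat{c}_\#p);p\in\Delta\A\}$ is immediate, since the map $p\mapsto(p,\hat{c}_\#p)$ is linear in $p$.

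Throughout, utility on $\Delta\A\times\Delta\C$ is interpreted as expected utility. For $(p,\hat{c}_\#p)\in\Gamma_{\hat{c}}$ this reads $\int u(a,\hat{c}(a)\rvert\omega)\,dp(a)$, where $u(a,\hat{c}(a)\rvert\omega)$ means $\int u(a,c\rvert\omega)\,d\hat{c}(a)(c)$. Under this interpretation, maximizing over $\Gamma_{\hat{c}}$ is the same as choosing a mixed action against $\hat{c}$. The incentive compatibility constraint (eq.~\ref{eq:Abuse.IC}) says that for every $s$ and $\omega$, the element $\gamma_s(\omega):=(\hat{a}(s\rvert\omega),\hat{c}_\#\hat{a}(s\rvert\omega))$ is supported on pure maximizers. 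Hence $\gamma_s(\omega)$ lies in $\argmax_{\Gamma_{\hat{c}}}u(\cdot\rvert\omega)$.

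Next I use informativeness. The strategy fails to be uninformative, so the set
\[
E:=\{\omega:\ \exists s,s'\ \text{with}\ \hat{a}(s\rvert\omega)\neq\hat{a}(s'\rvert\omega)\}
\]
has positive probability. For $\omega\in E$, the points $\gamma_s(\omega)$ and $\gamma_{s'}(\omega)$ differ in their first coordinate, so they are distinct elements of $\Gamma_{\hat{c}}$, and both are maximizers. Therefore $|\argmax_{\Gamma_{\hat{c}}}u(\cdot\rvert\omega)|\ge 2$ on $E$, which gives eq.~\ref{eq:convex.imp}. In fact the argmax is then a nondegenerate convex segment, but two points suffice.

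The main obstacle is measurability, together with the exact reading of ``for a.a.\ $\omega$, for all $s,s'$''. The negation of uninformativeness yields a non-null set of types $E$, but when $S$ is uncountable $E$ need not be measurable. In that case I would work with outer probability, or note that the event in eq.~\ref{eq:convex.imp} contains $E$ and state the Observation for the outer measure. If measurability of $\omega\mapsto\hat{a}(\cdot\rvert\omega)$ is assumed, then $E$ is measurable, either directly or by restricting to a countable dense set of states, and the argument goes through verbatim.
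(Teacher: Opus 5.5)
Your proof is correct and is essentially the paper's argument. The paper likewise takes $\Gamma=\Gamma_{\hat{c}}$ and notes that an agent playing an informative strategy must be indifferent between the distinct elements of $\Gamma_{\hat{c}}$ induced by different states; your write-up adds convexity of $\Gamma_{\hat{c}}$, the expected-utility reading on $\Gamma_{\hat{c}}$, and measurability of $E$, which the paper leaves implicit (though your \emph{countable dense set} aside would need continuity in $s$, so use the per-pair reading of uninformativeness or universal measurability of a projection instead).
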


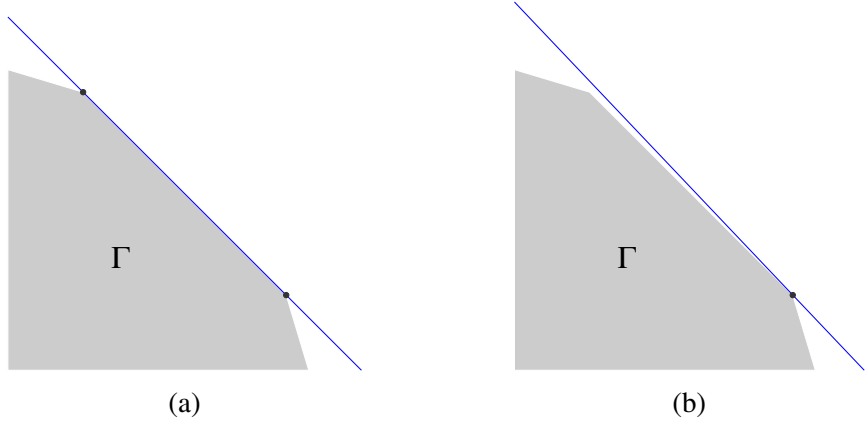
\begin{figure}
    \centering
    \begin{subfigure}{0.4\textwidth}
    \centering
    \begin{tikzpicture}
        \draw[draw =white, fill = black!20] (0,0) -- (4,0) -- (3.7,1) -- (1,3.7)  -- (0,4) -- cycle;
        \draw[blue] (4.7,0) -- (0,4.7);
        \filldraw [black!80] (3.7,1) circle (1 pt);
        \filldraw [black!80] (1,3.7) circle (1 pt);
        \node at (1.5,1.5) {$\Gamma$};
    \end{tikzpicture}
    \caption{}
    \end{subfigure}
    \begin{subfigure}{0.4\textwidth}
    \centering
    \begin{tikzpicture}
        \draw[draw =white, fill = black!20] (0,0) -- (4,0) -- (3.7,1) -- (1,3.7) -- (0,4) -- cycle;
        \draw[blue] (0,4.9) -- (4.65,0);
        \filldraw [black!80] (3.7,1) circle (1 pt);
        \node at (1.5,1.5) {$\Gamma$};
    \end{tikzpicture}
    \caption{}
    \end{subfigure}
    \caption{The information that can be communicated with invariant preferences is limited by the measure of types $\omega$ whose indifference curves are multi-tangent to the same convex set $\Gamma$. Even if this holds for a specific type (left), it will fail for very nearby types (right).}
    \label{fig:convex}
\end{figure}
Specifically, a positive measure of agents must have indifference hyperplanes that are multi-tangent to the same convex set $\Gamma$. While it is easy to construct a set $\Gamma$ that is multi-tangent for a fixed type $\omega$, this will be knife-edge --- agents with very nearby preferences would have only a single tangency point, as illustrated in Figure \ref{fig:convex}.

To formalize the notion that unique maximizers are `common', we use the following notion:
\begin{defn}\label{def:gen}
    A private utility function $u:X\times\Omega\rightarrow \mathbb{R}$ is \textbf{prevalent}\footnote{The term is drawn from \cite[Definition 6]{Hunt92}, which defines prevalent sets as a generalization of generic sets to infinite dimensional spaces. Specifically, if a private preference fails to be prevalent then it lies within a shy subset of $\mathbb{U}$ with positive probability.}
    \begin{itemize}
        \item (for finite $X$) if $u$ is distributed according to a density on $\mathbb{R}^X$. 
        \item (for finite-dimensional $X\subseteq \mathbb{R}^n$) if $u(x\rvert\omega)=\tilde{u}(x\rvert \omega)+\sum_{i=1}^n \lambda_i(\omega)v_i(x\rvert\omega)$ for functions $\tilde{u}:X\times\Omega\rightarrow \mathbb{R},\lambda:\Omega\rightarrow \mathbb{R}^n,v:X\times\Omega\rightarrow \mathbb{R}^n$, such that $v(\cdot\rvert\omega)$ is injective for every $\omega$, and the distribution of $\lambda$ conditional on $\tilde{u},v$ admits a density.
    \end{itemize}
\end{defn}
The simplest version of a prevalent utility in the latter case is when $\tilde{u}$ and $v$ are public and only $\lambda$ is private (and is drawn from a density).\footnote{Considering such a utility on the $n$-dimensional simplex $X=\Delta^n$, with $\tilde{u}\equiv 0$ and $v_i(x)=x_i$, we see that the finite definition is a special case of the finite-dimensional definition (modulo normalizing the utility of $x_{n+1}$).}

These utilities describe scenarios where there is some uncertainty about how an agent values different `dimensions' of the space $X$. In the finite case, these dimensions are given by the elements of $X$ (which can be viewed as vectors in $\Delta X$), in the finite-dimensional space these dimensions are given by the coordinate function $v$. 



\begin{thm}\label{thm:fin.fail}
    For prevalent private utility functions, the only rationalizable strategies are uninformative.

    Moreover, if $\mathbb{U}$ is a separable banach space, then for a fixed mechanism $\hat{c}$ there is a topologically meagre subset $U'\subseteq \mathbb{U}$, such that informative strategies are incentive compatible only if $\mathbb{P}[u\in U']>0$.
\end{thm}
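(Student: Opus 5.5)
The plan is to reduce the theorem to the Observation above: an informative incentive-compatible strategy under a fixed mechanism $\hat{c}$ requires that, with positive probability, $\omega$'s utility has at least two maximizers over the convex set $\Gamma_{\hat{c}}$. More precisely, it needs two distinct actions $a\neq a'$ in $\argmax_{a\in\A}u(a,\hat{c}(a)\rvert\omega)$, since otherwise $\hat{a}(s\rvert\omega)$ is pinned down independently of $s$. So it suffices to show that for each fixed $\hat{c}$,
\[
\mathbb{P}\Big[\big|\argmax_{a\in\A} u(a,\hat{c}(a)\rvert\omega)\big|\ge 2\Big]=0 .
\]
Here the argmax is understood in terms of the induced outcomes $(a,\hat{c}(a))\in\Gamma_{\hat{c}}$. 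Since utilities are linear in the lottery over $\C$, the agent maximizes a linear functional over $\Gamma_{\hat{c}}$.

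\textbf{Finite case.} Let $X=\A\times\C$ be finite and let $u$ have a density on $\mathbb{R}^X$. Each action $a$ induces a point $x_a=(\delta_a,\hat{c}(a))\in\Delta X$. The expected utility is $\langle u,x_a\rangle$. The event of a tie between two induced points $x_a\neq x_{a'}$ is $\{u:\langle u,x_a-x_{a'}\rangle=0\}$, which is a hyperplane in $\mathbb{R}^X$ and so has Lebesgue measure zero. Because $\A$ is finite, there are finitely many pairs, and the union of these events is still null. Absolute continuity of the distribution of $u$ then gives probability zero.

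If instead $\A$ is infinite but $\C$ is finite, I would condition on $a$'s own payoff component. I would argue that distinct actions giving identical points do not matter, since they are payoff-equivalent and can be identified. The remaining difficulty is that countably or uncountably many pairs could tie.

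\textbf{Finite-dimensional case.} Here I write $u=\tilde u+\lambda\cdot v$ and condition on $(\tilde u,v)$. Fix two distinct maximizing outcomes $x\neq x'$ in $\Gamma$. Optimality of both points means that $\lambda$ lies in the normal cone $N_\Gamma(x)\cap N_\Gamma(x')$, shifted by the fixed $\tilde u$-gradient. This follows from the convex-analysis fact that a linear (or concave) functional is maximized at two points only if its gradient lies in the normal cone of a face of dimension $\ge1$. Injectivity of $v$ ensures that $v(x)\neq v(x')$, so the relevant condition is $\lambda\cdot(v(x)-v(x'))=\tilde u(x')-\tilde u(x)$, a nontrivial affine constraint on $\lambda$.

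The \textbf{main obstacle} is that there are uncountably many candidate pairs $(x,x')$, so I cannot simply take a union of null sets. To handle this, I would use that a convex set in $\mathbb{R}^n$ has multi-point maximizers only for $\lambda$ in the set of non-differentiability points of its support function $h_\Gamma(\lambda)=\sup_{x\in\Gamma}\lambda\cdot v(x)$, after absorbing $\tilde u$ by treating the problem in the image $v(\Gamma)$. The support function is convex, so by Rademacher/Alexandrov its non-differentiability set is Lebesgue-null in $\mathbb{R}^n$. If the maximizer set contains two points, $h$ has a nontrivial subdifferential there. This is where I expect subtleties if $\tilde u$ is not linear in the $v$-coordinates. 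In that case I would first linearize by passing to the epigraph-type convex set of pairs $(v(x),\tilde u(x))$ and apply the same non-differentiability argument with the last coordinate fixed to weight $1$. The density of $\lambda$ conditional on $(\tilde u,v)$ then yields probability zero, and Fubini over $(\tilde u,v)$ concludes.

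\textbf{Meagre statement.} For the second claim I would take $U'$ to be the set of $u\in\mathbb{U}$ whose maximizer set over $\Gamma_{\hat c}$ has at least two points. I would write $U'=\bigcup_k U_k$, where $U_k$ is the set of $u$ having two maximizers at distance $\ge 1/k$. Each $U_k$ is closed by compactness and continuity. Each $U_k$ also has empty interior: given any $u$ in it, adding a small continuous bump peaked at one maximizer breaks the tie by a separation of order $1/k$, so a nearby function lies outside $U_k$. Hence $U'$ is a countable union of nowhere dense sets, i.e.\ meagre. Separability of $\mathbb{U}$ is what I would use to make this bump perturbation well defined and small in norm.
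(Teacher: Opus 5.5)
Your reduction to two distinct maximizing actions on a positive-probability set of types, and your finite-dimensional argument, coincide with the paper's proof. Your lifted support function $\lambda\mapsto\sup_{\gamma\in\Gamma}\big(\int\tilde u\,d\gamma+\lambda\cdot\int v\,d\gamma\big)$ is exactly the paper's $g^{\tilde u,v}_\Gamma(\lambda)=f_\Gamma(\tilde u+\lambda\cdot v)$, which is convex and hence a.e.\ differentiable. In the finite case you replace Rademacher by an explicit finite union of proper hyperplanes $\{u:\langle u,x_a-x_{a'}\rangle=0\}$. This is more elementary and correct, since distinct actions give distinct $x_a$. Your loose end about infinite $\A$ with finite $\C$ can be dropped: Definition~\ref{def:gen} puts that case in the finite-dimensional branch.

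One caveat applies to both your proof and the paper's. Differentiability of $g$ only gives that all maximizers share the same vector $\int v\,d\gamma$. Your claim that injectivity of $v$ gives $v(x)\neq v(x')$ holds for degenerate outcomes, not for lotteries. It is fine for pure $\hat c$, but with mixed consequences it fails. For example, take $\A=\{a_1,a_2\}$, $\C=[0,1]$, $\tilde u\equiv0$, $v(a_1,c)=(c,c^2)$ and $v(a_2,c)=(c,c^2+\tfrac14)$. With $\hat c(a_1)=\tfrac12\delta_0+\tfrac12\delta_1$ and $\hat c(a_2)=\delta_{1/2}$, every $\lambda$ is indifferent between $a_1$ and $a_2$. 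Compare the purity restriction in Proposition~\ref{prop:semi.gen}(b)(ii).

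For the meagre claim you take a genuinely different route. The paper just invokes Mazur's theorem:
\begin{itemize}
\item $f_\Gamma(u)=\sup_{\gamma\in\Gamma}\int u\,d\gamma$ is convex and $1$-Lipschitz for \emph{any} mechanism;
\item two distinct maximizing outcomes are two distinct subgradients;
\item the non-G\^ateaux points of a continuous convex function on a separable Banach space form a meagre set.
\end{itemize}

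Your direct Baire argument has a step that fails as written: $U_k$ need not be closed. Closedness needs $a\mapsto\phi_u(a):=\int u(a,c)\,d\hat c(a)(c)$ to be upper semicontinuous, which holds for continuous mechanisms but not in general. The mechanisms of interest are discontinuous, for instance the one in Example~\ref{ex:mb} that obeys only if $a_m=\pi(a_c)$. A sequence of second maximizers can converge to an action where $\hat c$ jumps, and the tie then disappears in the limit. There is a second problem: breaking the tie between two maximizers is not enough. You must confine \emph{all} maximizers of every nearby utility to a small set.

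Both issues are fixed at once. For $u\in U_k$ with a maximizer $a^*$, take $\beta(a)=\max\{0,1-d(a,a^*)/r\}$ with $r<1/(2k)$. Because $\beta$ depends on the action alone, $\phi_{u+\epsilon\beta}=\phi_u+\epsilon\beta$ whatever $\hat c$ is. Now take any $u''$ with $\|u''-u-\epsilon\beta\|_\infty<\epsilon/3$. Every action at distance at least $r$ from $a^*$ is worse than $a^*$ by at least $\epsilon/3$, so all maximizers of $u''$ lie within $r$ of $a^*$ and $u''\notin U_k$. Hence every open ball meeting $U_k$ contains a smaller ball missing it, so $U_k$ is nowhere dense with no closedness needed.

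The repaired argument is more elementary than the paper's. It uses separability only through metrizability of $\A$; the bump's existence and small norm are automatic. The paper's convex-analytic route, by contrast, needs no localization and handles all three cases with one principle.
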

\begin{proof}

Let $\Gamma\subseteq \Delta \A \times \Delta\C$ denote the possible outcomes associated with a mechanism. The agent's objective is to choose the outcome $\gamma\in\Gamma$ that maximizes their expected utility:
\begin{equation}\label{eq:fgamma}
f_\Gamma(u):=\max_{\gamma\in\Gamma} \int u(a,c)\,d\gamma(a,c).
\end{equation}
Note that the agent's value $f_\Gamma$ is convex in their utility $u$, and convex analysis says the solutions $\gamma^\ast$ will lie in the subdifferential $\partial f_\Gamma(u)$ (formally describing the supporting hyperplanes of $f_\Gamma$ at $u$).

For a utility $u$ to have multiple solutions, it must be the case that there are multiple supporting hyperplanes --- ie. $|\partial f(u)|\ge 1$ or $f_\Gamma$ is not Gâteaux differentiable at $u$ (if $\mathbb{U}$ is finite dimensional, this coincides with standard notions of differentiability). However, convex functions are differentiable for most arguments:

Specifically, Rademacher's theorem implies that any convex function $f_\Gamma$ on a finite-dimensional space is differentiable at lebesgue a.e. $u$ (\cite[Theorem 25.5]{Rockafellar}), providing the first prevalent property.

If $\A\times\C\subseteq \mathbb{R}^n$ is finite-dimensional, let $u=\tilde{u}+\sum_{i=1}^n \lambda_i v_i$ as in Definition \ref{def:gen}. Fixing $\tilde{u},v$, define the convex function $g_\Gamma^{\tilde{u},v}:\R^n\rightarrow \R$
$$
g_\Gamma^{\tilde{u},v}(\lambda):=f_\Gamma(\tilde{u}+\lambda_1v_1+\cdots+\lambda_Nv_N).
$$
The subdifferential of this function then contains the $(v_i)_i$ coordinates of the solutions to the agent's problem. Since $g_\Gamma^{\tilde{u},v}$ is differentiable for a.e. $\lambda$, the solution is unique for a.e. $\lambda,\tilde{u},v$.

If $\mathbb{U}$ is a separable Banach space, then for any continuous convex function $f_\Gamma:\mathbb{U}\rightarrow\mathbb{R}$, the set where it is not differentiable $\{u;|\partial f(u)|>1\}$ is meagre (\cite[Theorem 5.61]{convex22}).
\end{proof}
By slightly modifying the proof we can obtain similar results when we only consider uncertainty over one aspect of the agent's preference:

\begin{prop}\label{prop:semi.gen}
    Decompose the agents utility $u(\cdot\rvert \omega)=v_0(\cdot\rvert \omega)+v_1(\cdot\rvert \omega)$ where $v_0$ and $v_1$ are independent.
    \begin{enumerate}[label = (\alph*)]
        \item If $A=A_0\times A_1$ and $v_0:\A_0\times\Omega\rightarrow\mathbb{R}$ is a prevalent private utility function on the space of actions then, decomposing the strategy $\hat{a}:=(\hat{a}_0,\hat{a}_1)$ all rationalizeable $\hat{a}$ have uninformative $\hat{a}_0$.
        \item If $v_0:\C\times\Omega\rightarrow \mathbb{R}$ is a prevalent private utility function on the space of consequences and either (i) $|\C|<\infty$, or (ii) $\C\subseteq \R^N$ is finite-dimensional and the designer is restricted to pure mechanisms $\hat{c}:\A\rightarrow \C$, then the only incentive compatible strategies $\hat{a}$ are such that $\hat{c}\circ \hat{a}:S\rightarrow \Delta\C$ is constant.
    \end{enumerate}
    The results hold without independence of $v_0,v_1$ as long as $v_0$ remains prevalent when conditioned on a.e. $v_1$.
\end{prop}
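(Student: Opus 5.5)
The plan is to rerun the convex-analysis argument from the proof of Theorem \ref{thm:fin.fail}, but to condition on the independent component $v_1$ and apply the differentiability result only to the component that is prevalent, namely $v_0$.

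Fix the mechanism $\hat{c}$ and a realization of $v_1$. In both parts, the agent's problem becomes maximization of a linear functional of $v_0$ plus a constant, over a suitably chosen convex set. Differentiability of the resulting convex value function in $v_0$ will then pin down the relevant marginal of the optimum, and the informativeness conclusions will follow.

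\textbf{Part (a).} Fix $v_1$ and consider the set $\Gamma_{\hat{c}}\subseteq\Delta\A\times\Delta\C$ of outcomes. Define
$$
h(v_0):=\max_{\gamma\in\Gamma_{\hat{c}}}\int\big(v_0(a_0)+v_1(a,c)\big)\,d\gamma(a,c).
$$
This is a pointwise maximum of affine functions of $v_0$, so it is convex in $v_0$. Its subdifferential at $v_0$ consists of the $\A_0$-marginals $\mathrm{marg}_{\A_0}\gamma^\ast$ of the optimal $\gamma^\ast$, viewed as functionals $v_0\mapsto\int v_0\,d\gamma^\ast$. For finite $\A_0$, Rademacher's theorem (\cite[Theorem 25.5]{Rockafellar}) gives differentiability for Lebesgue-a.e. $v_0\in\R^{\A_0}$. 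Hence, because $v_0$ has a density conditional on $v_1$, the marginal of every optimal outcome on $\A_0$ is a.s. unique.

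For finite-dimensional $\A_0$ I would pass through $g(\lambda)=h(\tilde u+\sum_i\lambda_i v_i)$, exactly as in the theorem. This pins down the vector $\int v(a_0)\,d\gamma^\ast$, and injectivity of $v$ should then identify the pure action $a_0$.

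The step I expect to be the main obstacle is going from ``unique $\A_0$-marginal of optimal mixtures'' to ``$\hat a_0(s\mid\omega)$ independent of $s$''. The optimal mixed outcomes chosen in different states $s$ are all in the argmax. So, for a.e. $\omega$, they share the same $\A_0$-marginal, and therefore every pure action in the support has the same $a_0$-component, namely the unique maximizer. Thus $\hat a_0$ is a.s. a constant point mass, hence uninformative. In the finite-dimensional case, uniqueness of the barycenter $\int v\,d\gamma$ alone does not force a point mass. I would first try to argue via pure optima: every action in the support is itself optimal, so its $v$-image must equal the gradient, and injectivity of $v$ then forces a single $a_0$.

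\textbf{Part (b).} Here I would change variables from actions to the consequence distributions they induce. Let $K:=\{\hat{c}_\# p: p\in\Delta\A\}\subseteq\Delta\C$. Conditional on $v_1$, the agent's problem is to maximize $\int v_0\,d\kappa+\phi(\kappa)$ over induced pairs, where $\phi$ collects the $v_1$ part. Rather than $K$ itself, I would use the convex set of pairs $(\kappa,t)$ with $t$ an attainable $v_1$-value, and the value function
$$
h(v_0)=\max_{(\kappa,t)}\Big[\int v_0\,d\kappa+t\Big],
$$
which is convex in $v_0$. Its gradient, where it exists, is the unique optimal consequence distribution $\kappa^\ast$.

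In case (i), with $|\C|<\infty$, Rademacher's theorem plus the conditional density of $v_0$ give uniqueness of $\kappa^\ast$ a.s. Then $\hat c\circ\hat a(s)=\kappa^\ast$ for every $s$, so $\hat c\circ\hat a$ is constant. In case (ii), the mechanism is pure, so each optimal action yields a point mass $\delta_{\hat c(a)}$. Differentiability of $g(\lambda)$ pins down $v(\hat c(a))$, and injectivity of $v$ pins down $\hat c(a)$. Finally, the non-independent variant only requires this conditional density argument for a.e. $v_1$, which is exactly what was used, so Fubini finishes it.
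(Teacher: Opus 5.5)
Your proposal is correct and follows the paper's intended route. The paper gives no separate proof, saying only that it is obtained ``by slightly modifying'' the argument for Theorem~\ref{thm:fin.fail}: condition on $v_1$, treat the agent's value as a convex function of the prevalent component $v_0$ (or of $\lambda$ in the finite-dimensional case), and use a.e.\ differentiability to pin down the $\A_0$- or $\C$-marginal of optimal outcomes, exactly as you do. You also make explicit a step the paper leaves implicit, and you handle it correctly: because every support point of $\hat a(s\rvert\omega)$ is itself optimal, a unique subgradient together with injectivity of $v$ forces a unique $a_0$ (respectively a unique $\hat c(a)$, which is why purity of $\hat c$ is needed in (b)(ii)).
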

The first result (a) is relevant if we consider a repeated game where the stage game has agents' idiosyncrasy drawn at the beginning of each period.

The second result (b) is useful in cheap talk models, where messages are not payoff-relevant ($v_1\equiv 0$). We cannot rule out informative strategies in this case. Indeed, if the designer commits to the same choice after every message, every strategy will be incentive compatible for the sender. However, this is the \textit{only} way to generate informative strategies: the designer cannot ever use any of the information that they are able to get from the sender.\footnote{The reason that we need to restrict to pure strategies in (ii) is because we need to form a basis over the set of possible consequences. Thus it cannot be applied to $\Delta\C$ if $|\C|$ is a continuum. This is not a problem in communication game applications if the receiver's best response contains finite pure actions.}
\subsection{Interpreting Abusing Indifference}\label{sec:resolutions}
Since it is irrational to adopt strategies violating III under realistic privacy conditions, results obtained by abusing indifference cannot be interpreted the same way as other equilibria that involve indifference. In this section we advance two related interpretations of these equilibria: the first sees abusing indifference as a powerful, but incomplete, technique for finding realistic equilibria (where realistic merely means compatible with private preferences), the second sees it as a way of obtaining an `upper bound' on the set of realistic equilibrium possibilities. 

\subsubsection*{Using Indifference}
There are two ways to perturb an equilibrium violating III to be compatible with private preferences:
\begin{enumerate}[label = (\arabic*)]
    \item Make the state payoff-relevant to the agent  --- ie. the state is no longer a \textit{sunspot}.
    \item Make the state informative about the mechanism --- ie. the state is no longer \textit{private}.
\end{enumerate}
Beyond justifying the equilibrium in question, introducing these perturbations may also have theoretic benefits: they may
\begin{enumerate}[label = (\alph*)]
    \item demonstrate forces and structures necessary to support desirable equilibrium outcomes,
    \item extend the insights of `Abusing Indifference' to a richer class of models (for example, in communication games, beyond the measure-0 set of precisely state-independent preferences).
\end{enumerate}
As an example of (1), \cite{SGGK23} study how cheap talk communication can occur in finite-action binary-state settings when the sender's preference is slightly state-dependent. Towards the goal of (b), it seems that the obtained equilibria are preserved if the sender's preference is significantly state-dependent, only requiring that the \textit{ordinal} ranking of pure actions is state-independent. 

\cite{B24} considers a similar problem in multi-state settings, as another example of (1). Towards the goal of (a), he finds that aligning preferences\footnote{In the absence of other state-dependence, aligning preferences (ie. making the sender put $\epsilon$ weight on the receiver's preference) is a natural method for introducing state-dependence to sender-receiver games. This can arise if a sender cares slightly about receiver satisfaction (perhaps motivated by reputational concerns), or akin to an sender who breaks ties in favour of a receiver.} is insufficient to preserve cheap talk communication in a wide class of `qualitative' settings: `burning money' is an essential part of credible communication in such cases.

\cite{AGS23} demonstrate (2) in cheap talk games. By furnishing the receiver with a partially informative signal of the state, the sender's information will correlate with the receiver's knowledge, and thus their strategy. It is shown that in certain binary state settings this preserves the unanimously preferred cheap talk equilibrium.

In repeated games with imperfect monitoring, (1) describes an agent with history-dependent preferences, while (2) describes both correlated monitoring (\cite{MM02}) where signals are exogeneously correlated, and belief-based equilibria (e.g. \cite{BO02}) where signals are endogeneously correlated through agents mixed strategies (to be discussed more in the next section).\footnote{As correlated information can often be endogeneously created in repeated games, we need to be careful with our interpretation of our results in these settings. In Section \ref{sec:disc} we argue that it is likely possible to approximate equilibria violating III with belief based equilibria, but these equilibria are \textit{qualitatively} distinct from those they approximate. Moreover, constructing approximating equilibria is highly non-trivial.}

\subsubsection*{An Upper Bound}
A second interpretation of Abusing Indifference equilibria is as an agnostic benchmark for outcomes in nearby models.

An upper-hemicontinuity argument generally shows that for slight perturbations (to agent payoffs/beliefs), the resulting equilibria will approximate equilibria of the Abusing Indifference model, thus the set of equilibria of the unperturbed game form an upper bound (in the set-theoretic sense) on the limit equilibria of nearby games.

In a similar way, Abusing Indifference equilibria may be interpreted as $\epsilon$-equilibria, relaxing incentive constraints so that agents are only constrained to strategies that yield utility within $\epsilon$ of their best response utility ($\epsilon$ should be understood as a proxy for the private variance of preferences). 

This $\epsilon$ relaxation is often interpreted as cover for some small omitted factor in agents' decision-making process. For example, there may be an omitted `status quo bias' that alters agent preferences in a state-dependent manner, akin to perturbation (1),\footnote{The application of status quo bias is clear for pure strategy equilibria --- if the state is $s$, the agent is biased to choose $\hat{a}(s)$ --- but for mixed equilibria it is less obvious: unless the bias can make randomizing a strict best response, the bias must induce the right proportion of agents to choose each action. For some equilibria, it may be necessary for the status quo to be stochastic.} alternatively, these equilibria can be understood as allowing for agents to be slightly irrational. This is less satisfying, as for Abusing Indifference equilibria, it is necessary that this irrationality is expressed in a state-dependent manner --- which in the author's opinion is better described as a state-dependent bias, in the sense of (1).



\section{Applications to Repeated Games}\label{sec:app.rg}

In this section we conceive of repeated games as a general environment where agents make decisions in multiple rounds. This framework includes finitely repeated games, games with random matching between rounds, repetitions of different games, and games will communication between rounds.
\subsection{Model of Repeated Games}

Formally, there is a set of agents $\mathbb{I}$ who make decisions over a set of periods $\mathbb{T}\subseteq \mathbb{N}$. Agent $i$ in period $t$ chooses an action $a_{i,t}$ from the set $A_{i,t}$. We will denote $a_{-i,t}:=(a_{j,t})_{j\neq i}$ the profile of actions of agents other than $i$ in period $t$; $a_{t}:= (a_{i,t})_{i\in \mathbb{I}}$ to indicate the complete profile of actions chosen in period $t$; $a:= (a_{i,t})_{i\in \mathbb{I},t\in \mathbb{T}}$ to be a complete sequence of play; if $t$ is preceded by an inequality it refers to all periods satisfying that inequality, eg. $a_{i,>t}:= (a_{i,\tau})_{\tau>t}$. We use the same shorthands for signals $s$.

The information that agent $i$ learn about the play in the previous period $t$ is described by a signal $s_{i,t}$ drawn from a distribution $G_t(a_t)$. We assume that agents have perfect memory of past signals, and that the signal trajectory $s_{i,\le t}$ is the full information available to agent $i$ at time $t$ (incorporating an agent's memory of their previous actions). The complete signal trajectory $s$ lives in a sequence space $S=\bigtimes_i S_i$ equipped with filtrations $\mathcal{H}_{i,t}$ describing the information about the history known to agent $i$ at the beginning of period $t$. 

We assume that an agent's idiosyncrasy in period $t$, denoted $\omega_t$ and revealed at the beginning of each period, has a transient component $\omega_t^T$ that is independently drawn each period (thus in general $\omega_t\equiv(\omega_t^P,\omega_t^T)$ where $\omega_t^P$ is persistent).

Agent $i$ has discount factor $\delta$ and acts to maximize
\begin{equation}\label{eq:rep.u}
    u_{i,t}(a,\omega_t)\equiv (1-\delta) u_{i,t}^s(a_{ t},\omega_t)+\delta\E{u_{i,t+1}^c(a_{> t},\omega_{t+1})},
\end{equation}
where $u^s_{i,t}$ is the stage game utility from the current period, and the expectation is over future idiosyncrasies and signal observations. For our fragility theorem to apply, we will assume $\omega_t^T\mapsto u_{i,t}^s(\cdot\rvert \omega_t^T,\omega^P_t)$ is a prevalent private utility for a.e. $\omega_t^P$, requiring that $A_{i,t}$ is finite or finite-dimensional for every $i,t$.

Applying Proposition \ref{prop:semi.gen} to this setting then establishes the following result:
\begin{cor}
    In any BPE agents do not condition their action in period $t$ on information unless it is correlated with others' strategies.
\end{cor}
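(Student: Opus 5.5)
The plan is to reduce agent $i$'s period-$t$ decision in a BPE to an instance of the base model and then apply Proposition \ref{prop:semi.gen}(a). I fix a BPE and an agent $i$ and period $t$. I decompose $i$'s private history $s_{i,\le t}$ into a component $s^C$ that is correlated with the other agents' period-$t$ actions $a_{-i,t}$, and a residual $s^I$. Concretely, $s^C$ is $i$'s belief about $a_{-i,t}$ and about the others' continuation strategies; $s^I$ is conditionally independent of these given $s^C$. Conditioning on $s^C$ and on $i$'s persistent type $\omega_t^P$ turns the residual $s^I$ into a private sunspot in the sense of the base model.

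Next I write the agent's objective \eqref{eq:rep.u} in the form $v_0+v_1$. Here $v_0(a_{i,t}\rvert\omega_t^T):=(1-\delta)\E{u^s_{i,t}(a_{i,t},a_{-i,t},\omega_t)\rvert s^C}$ is the expected stage payoff. The term $v_1$ collects the discounted continuation value, which is a function of $a_{i,t}$ only through the induced distribution of signals and the others' continuation play. The "mechanism" $\hat{c}$ is the map from $a_{i,t}$ to this expected continuation value. Because the others' strategies and the signal technology $G_t$ do not depend on $s^I$ once $s^C$ is fixed, $\hat{c}$ is the same for every realization of $s^I$, which is exactly what the base model requires.

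I then check the hypotheses of Proposition \ref{prop:semi.gen}(a). By assumption, $\omega_t^T\mapsto u^s_{i,t}(\cdot\rvert\omega_t^T,\omega_t^P)$ is prevalent for a.e. $\omega_t^P$. Prevalence passes to the expected stage payoff $v_0$: in the finite case, integrating against a fixed belief over $a_{-i,t}$ is a linear map on $\mathbb{R}^{A_{i,t}\times A_{-i,t}}$, and I need it to keep a density on $\mathbb{R}^{A_{i,t}}$. The term $v_1$ depends on $\omega_t^T$ only through the agent's own future play, and $\omega_t^T$ is transient and independent across periods. So, conditional on $v_1$, $v_0$ remains prevalent, and the final clause of Proposition \ref{prop:semi.gen} applies. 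The conclusion is that $\hat{a}_{i,t}$ is uninformative in $s^I$ for a.e. type: $i$ conditions only on $s^C$, i.e.\ only on information correlated with others' strategies.

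The main obstacle is the prevalence of $v_0$ after integrating out $a_{-i,t}$. That linear image need not have a density in degenerate cases, for example if the stage utility's randomness is concentrated on coordinates that are averaged together. I would first argue it for the natural case where the prevalent perturbation acts on own actions additively (as in the finite-dimensional definition with $v$ depending on $a_{i,t}$). Otherwise I would interpret the corollary's hypothesis as prevalence of the expected stage utility on $A_{i,t}$. A secondary subtlety is that $v_1$ itself depends on $i$'s future idiosyncrasies. Independence of $\omega_t^T$ from $\omega_{t+1}$, together with the fact that the expectation in \eqref{eq:rep.u} is over future types, should handle this.
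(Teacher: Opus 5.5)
Your proposal is correct and follows the paper's route. The paper's entire argument is to invoke Proposition \ref{prop:semi.gen}(a) with the period-$t$ action as $A_0$, and your reduction makes that application explicit: you condition on the belief so that the residual history becomes a private sunspot, take the expected continuation value as the mechanism, and use transience of $\omega_t^T$ to get conditional prevalence. The step you flag as the main obstacle is milder than you fear. In the finite case, integrating against any belief $\beta$ over $a_{-i,t}$ is a surjective linear map onto $\mathbb{R}^{A_{i,t}}$, since functions constant in $a_{-i,t}$ already hit every target, and a surjective linear image of a law with a density again has a density, so $v_0$ stays prevalent automatically. In the finite-dimensional case the coefficients $\lambda$ are unaffected by the averaging, and only injectivity of $a_{i,t}\mapsto \int v(a_{i,t},a_{-i,t})\,d\beta$ can fail; your fallback of assuming prevalence of expected own-action utility is exactly how the paper uses the assumption in Example \ref{ex:pub.good}.
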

The remainder of this section explores consequences of this corollary. 

We consider a \textbf{pure strategy} for agent $i$ to be a $\mathcal{H}_{i,t}$-adapted process taking pure values in $A_{i,t}$ --- ie. they do not condition their strategy on their private type $\omega_t$.\footnote{Note that equilibria with public randomization are pure, and can occur if the signal contains a public component that is distributed independent of the previous actions.} We thus use \textbf{mixing} to describe both strategies that explicitly randomize (ie. choose actions $a\in\Delta A_{i,t}\setminus A_{i,t}$) in the case of public preferences, and strategies that condition on $\omega$ in the case of private preferences.



We will work with the following menu of monitoring technologies:
\begin{defn}[Monitoring Terminology]
A signal structure is
\begin{itemize}
    \item \textbf{public monitoring} if $s_{i,t}\equiv (s_t^p,a_{i,t})$ where $s_t^p$ is a common (public) signal;
    \item \textbf{private monitoring} if $\{s_{i,t}\}_{i\in \mathbb{I}}$ are conditionally independent given $a_{t}$;
    \item \textbf{noisy} if the signal $s_{t}$ is drawn from distributions $\{G_{t}(a_{-i,t};a_{i,t})\}_{a_{-i,t}}$ that are mutually absolutely continuous for every $a_{i,t}$;
    \item \textbf{deterministic} if the distributions $\{G_t(a_{t})\}_{a_{t}}$ are degenerate;
    \item \textbf{anonymous} in period $t$ if $A_{i,t}\equiv A_t$ across individuals and $G_t$ is invariant to permutations: $G_t((a_{i,t})_i)=G_t((a_{\pi(i),t})_i)$ for any bijection $\pi:\mathbb{I}\leftrightarrow\mathbb{I}$.
\end{itemize}
\end{defn}
We will say a signal is public+private if it contains both a public and a noisy private component.\footnote{Note that there is a middle ground of \textit{correlated} monitoring that is neither public nor private (studied by \cite{MM02}).}


Anonymous monitoring features in oligopoly models (e.g. \cite{APS86}), where firms only observe the price which is determined by the total production.


A sequence of outcomes is \textbf{stage-Nash} if the outcome in each period $t$ is an equilibrium of the stage game $(\mathbb{I},A_{t},u_{t}^s)$. Note that these are the only equilibria possible if future actions are independent of current actions.

To explore how our concepts apply to repeated games, we first construct several equilibria with one-period punishments in a repeated public goods game with public preferences:
\begin{ex}[Public Goods Game]\label{ex:pub.good}
    Consider a $N$-player game where each agent $i$ must decide an amount $a_i\in[0,1]$ to contribute to the public good. The public good provides value $\kappa\sum_i a_i$ to each agent, where $\kappa\in]1/2,1[$. Notably the public good receives zero funding in the Nash equilibrium of the one shot game. Agents discount the future at rate $\delta$.

    The preferences of agent $i$ at time $t$ over future play is then
    $$
    u_{i,t}(a)=(1-\delta)\sum_{\tau=t}^\infty\delta^{\tau-t}\Big(\kappa (a_{i,\tau}+a^{-i}_{\tau})-a_{i,\tau}\Big),
    $$
    where $a^{-i}_{t}$ is the total contribution of other agents in period $t$, which can be decomposed into
    $$
    u_{i,t}(a_t,c)=(1-\delta)\kappa a^{-i}_{t}+(1-\delta)(\kappa-1)a_{i,t}+\delta c,
    $$
    where $c$ is the continuation payoff. Note that the first term is a constant that agent $i$ cannot affect at time $t$, and thus can be neglected. 
    
    The stage utility $u_{i,t}$ is prevalent if $\kappa$ is drawn from a distribution with a density independent across periods --- this represents a private liquidity shock affecting how agents value cash.

    With perfect monitoring, folk theorem says that full investment can be supported whenever $\delta$ is sufficiently close to 1 (specifically $ \delta\ge\tfrac{1-\kappa}{(N-1)\kappa}$), using grim trigger strategies. However grim trigger is severe, resulting in an equilibrium that is fragile and unsustainable in trembling hand or noisy monitoring settings. 

    Instead, we will be interested in equilibria with minimal punishment for deviations (ie. agents are indifferent between on-path and lower contributions), and where punishments last one period.

    
    We can abuse indifference to find equilibria with minimal punishment for off-path play, making high levels of contributions possible off-path. Specifically, given a bijection $\pi:\mathbb{I}\leftrightarrow\mathbb{I}$ with no fixed point, if player $i$ observes the action of player $\pi(i)$, ie. $s_{i,t}=a_{\pi(i),t}$, we define the \textbf{Proportional Response equilibrium} with expected contribution vector $x=(x_i)_i$ to be
\begin{align}
    \label{eq:prop.resp}
    \hat{a}_{i,t}^\text{PR}-x_i=&\alpha_2({s}_{i,t-1}-x_{\pi(i)}),&
    \text{where } \alpha_2 x_{\pi(i)}\le& x_i\le 1-\alpha_2(1-x_{\pi(i)}).
\end{align}
    Thus an agent shirks (from the expected contribution $x_i$) by an amount proportional to how much the observed agent $\pi(i)$ shirked the previous period. The right hand constraint on $x$ ensures that prescribed contributions lie in $[0,1]$, and the proportionality constant is $\alpha_2:=\tfrac{1-\kappa}{\delta\kappa}$. This constant is calibrated to make each agent indifferent over all actions. This equilibrium makes it possible sustain positive contributions whenever $\alpha_2\le1$, equal to the folk theorem condition when $N=2$.

    Due to linearity, these strategies also form equilibria with noisy private monitoring when
\begin{equation}\label{eq:priv.mon}
    s_{i,t}\sim G_i(a_{\pi(i),t})
\end{equation}
    where $G_i(y)$ is a distribution with mean $y$, supported in $[-\epsilon_0,1+\epsilon_1]$ (where $\epsilon_0,\epsilon_1$ are sufficiently small
    ).\footnote{
    The bounds $\epsilon_0,\epsilon_1$ modify the bounds on the expected contributions $x$ in eq. \ref{eq:prop.resp}.
    } 
    
    The strategy can also be modified to describe an equilibrium with noisy (and anonymous) public monitoring,
\begin{equation}\label{eq:pub.mon}
     s_{t}^p\sim G\big(\sum_{i}a_{i,t}\big),
\end{equation}
    where $G(y)$ has mean $y$ and is supported in $[-\epsilon_0,N+\epsilon_1]$. The \textbf{Public Proportional equilibrium} is
\begin{align}
    \label{eq:prop.respN}
    \hat{a}_{i,t}^\text{p}-x_i=&\alpha_N({s}_{t-1}^p-x)&
    \alpha_N (\epsilon_0+x)\le& x_i\le 1-\alpha_N(N+\epsilon_1-x),
\end{align}
    where $x=\sum_{j}x_i$ and the proportionality constant is $\alpha_N:=\tfrac{1-\kappa}{\delta (N-1)\kappa}$. For small $\epsilon_0,\epsilon_1$, such an equilibrium can sustain contributions close to $1$ when $\delta>\frac{1-\kappa}{\kappa}\frac{N}{N-1}$ (for $\kappa<\frac{N}{2N-1}$, no level of patience can sustain this equilibrium). 
    
    The Public Proportional equilibrium requires higher $\delta$ because a deviating agent must also punish when observing a deviation.\footnote{Higher contributions can be sustained by relaxing our constraints of minimal, one-period punishments} This bound can be improved if the deviating agent `atones' mitigating their punishment. Suppose we have deterministic public signal $s_t^p=\sum_{j}a_{j,t}$. We define the `expected total contribution' $x_t$ in period $t$ to be
    \begin{subequations}\label{eq:atone}
    \begin{align}
        x_{1}:=&N&
        x_{t+1}:=&\begin{cases}
            N+(N-1)\alpha_N(s_t^p-x_t)& s_t^p<x_t,\\
            N&\text{else}.
        \end{cases}
    \end{align}
    The \textbf{Atonement equilibrium} is then defined by initially contributing $\hat{a}_{i,1}^A=1$, and thereafter play $\hat{a}^A_{i,t}:=\min\{1,a_{i,t}^A\}$ where
    \begin{align}
        a_{i,t+1}^\text{A}=&
        \begin{cases}
        1+\alpha_N((s_t^p-a_{i,t})-(x_t-a_{i,t}^A))&s_t^p<x_t,\\
        1&\text{else}.\\
        \end{cases} 
    \end{align}
    \end{subequations}
Here agents make the maximum contribution unless there was a shortfall the previous period. Otherwise, they react only to the shortfall attributable to other players, shading down their contribution a proportional amount --- thus if they are the only deviating agent they contribute the full amount. This sustains full contributions when $\delta\ge \frac{1-\kappa}{\kappa}$ (equal to the constraint for the proportional response equilibrium with perfect monitoring, equal to the folk theorem bound when $N=2$).

This strategy can be modified to accommodate noisy signals, but these adjustments are non-trivial due to the non-linearity of the strategy. 

Atonement allows agents to `consent' to their punishment (possibly by allowing their stage-payoff to go below their maxmin payoff) to allow the quick re-establishment of trust. This also allows continuation payoffs to stay close to the Pareto frontier. With two players, the equilibrium $\hat{a}^A$ actually moves the continuation payoff around the Pareto frontier as one player deviates, illustrated by Figure \ref{fig:folk}.
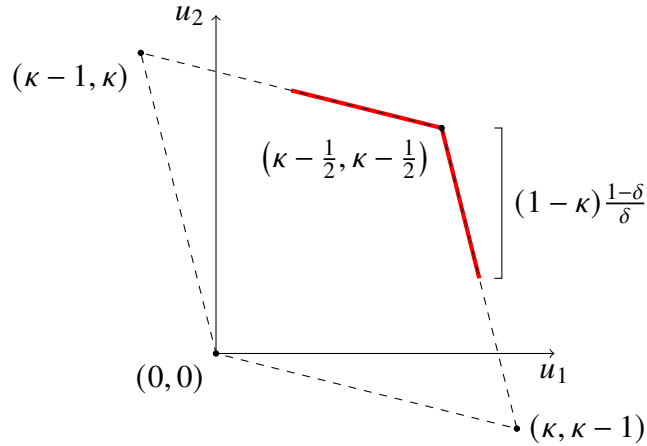
\begin{figure}
    \centering
    \begin{tikzpicture}
        \draw [red, ultra thick] (3.5,1) -- (3,3) -- (1,3.5);
        \draw [dashed, black] (3,3) -- (4,-1) -- (0,0) -- (-1,4) -- cycle;
        \draw[<->] (0,4.5) -- (0,0) -- (4.5,0);
        \node[anchor = north] at (4.5,0) {$u_1$};
        \node[anchor = east] at (0,4.5) {$u_2$};
        \draw (3.7,1) -- (3.8,1) -- (3.8,3) -- (3.7,3);
        \node[anchor = west] at (3.8,2) {$(1-\kappa)\frac{1-\delta}{\delta}$};
        \filldraw [black] (3,3) circle (1 pt) node[anchor = north east] {\color{black} $\big(\kappa-\tfrac{1}{2},\kappa-\tfrac{1}{2}\big)$};
        \filldraw [black] (4,-1) circle (1 pt) node[anchor = west] {\color{black} $(\kappa,\kappa-1)$};
        \filldraw [black] (-1,4) circle (1 pt) node[anchor = north east] {\color{black} $(\kappa-1,\kappa)$};
        \filldraw [black] (0,0) circle (1 pt) node[anchor = north east] {\color{black} $(0,0)$};
    \end{tikzpicture}
    \caption{The dashed line indicates the feasible payoffs of our public goods game. The red line describes the continuation payoff set in our Atonement equilibrium with two players, where a deviation results in a proportional loss to one's own continuation utility, and a proportional gain to the other's.}
    \label{fig:folk}
\end{figure}

\end{ex}

There are three types of equilbiria that we will consider, corresponding to the inferential Directed Acyclic Graphs (DAGs)\footnote{These are slightly different from causal DAGs, where past actions have a causal effect on signals. If these past actions are never chosen, this causal link is omitted from the inferential DAG, as prior beliefs put zero weight on these specific past actions and hence no inference can be made about these actions.} in Figure \ref{fig:dags}: perfect public equilibria (PPE), belief-free equilibria (BFE), and belief-based equilibria (BBE).
\begin{figure}
    \centering
    \begin{subfigure}{0.27\linewidth}\centering
    \begin{tikzpicture}[scale = 0.9,
    roundnode/.style={circle, draw = black, thick, minimum size=14mm},
    squarenode/.style={draw = black, thick, minimum size=12mm},]
        \node[roundnode]  (s) at (0,0) {\small $s^p_{\le t}$};
        \node[roundnode]  (s1) at (0,2) {\small $s_{i,\le t}$};
        \node[squarenode]  (a1) at (2.8,2) {\small $a_{i,>t}$};
        \node[roundnode]  (s2) at (0,-2) {\small $s_{j,\le t}$};
        \node[squarenode]  (a2) at (2.8,-2) {\small $a_{j,>t}$};
        \draw [->,ultra thick] (s) -- (a1);
        \draw [->,ultra thick] (s) -- (a2);
    \end{tikzpicture}
    \renewcommand\thesubfigure{PPE}\caption{}\label{fig:PPE}
    \end{subfigure}
    \begin{subfigure}{0.27\linewidth}\centering
    \begin{tikzpicture}[scale = 0.9,
    roundnode/.style={circle, draw = black, thick, minimum size=14mm},
    squarenode/.style={draw = black, thick, minimum size=12mm},]
        \node[roundnode]  (s) at (0,0) {\small $s^p_{\le t}$};
        \node[roundnode]  (s1) at (0,2) {\small $s_{i,\le t}$};
        \node[roundnode]  (a1) at (2.8,2) {\small $a_{i,>t}$};
        \node[roundnode]  (s2) at (0,-2) {\small $s_{j,\le t}$};
        \node[roundnode]  (a2) at (2.8,-2) {\small $a_{j,>t}$};
        \draw [->,ultra thick] (s1) -- (a1);
        \draw [->,ultra thick] (s) -- (a1);
        \draw [->,ultra thick] (s2) -- (a2);
        \draw [->,ultra thick] (s) -- (a2);
    \end{tikzpicture}
    \renewcommand\thesubfigure{BFE}
    \caption{}\label{fig:BFE}
    \end{subfigure}
    \begin{subfigure}{0.44\linewidth}\centering
    \begin{tikzpicture}[scale = 0.9,
    roundnode/.style={circle, draw = black, thick, minimum size=14mm},
    squarenode/.style={draw = black, thick, minimum size=12mm},]
        \node[roundnode]  (a0) at (-2.8,2) {\small $a_{i,\le t}$};
        \node[roundnode]  (a1) at (-2.8,-2) {\small $a_{j,\le t}$};
        \node[roundnode]  (s) at (0,0) {\small $s^p_{\le t}$};
        \node[roundnode]  (s1) at (0,2) {\small $s_{i,\le t}$};
        \node[roundnode]  (s2) at (0,-2) {\small $s_{j,\le t}$};
        \node[squarenode]  (a2) at (2.8,2) {\small $a_{i,>t}$};
        \node[squarenode]  (a3) at (2.8,-2) {\small $a_{j,>t}$};
        \draw [->,ultra thick] (a0) -- (s);
        \draw [->,ultra thick] (a0) -- (s1);
        \draw [->,ultra thick] (a0) -- (s2);
        \draw [->,ultra thick] (a1) -- (s);
        \draw [->,ultra thick] (a1) -- (s1);
        \draw [->,ultra thick] (a1) -- (s2);
        \draw [->,ultra thick] (s1) -- (a2);
        \draw [->,ultra thick] (s) -- (a2);
        \draw [->,ultra thick] (s2) -- (a3);
        \draw [->,ultra thick] (s) -- (a3);
    \end{tikzpicture}
    \renewcommand\thesubfigure{BBE}
    \caption{}\label{fig:BBE}
    \end{subfigure}
    \caption{Inferential DAGs relating past private ($s_{i,\le t}$) and public ($s^p_{\le t}$) signals to strategies ($a_{i,>t}$), square nodes are payoff-relevant inferences. In PPE public signals form a coordination device and private information is irrelevant; in BFE agents condition on signals that do not carry payoff-relevant information; in BBE signals are informative about others' past actions ($a_{j,\le t}$) and, through that, their future strategy.}
    \label{fig:dags}
\end{figure}
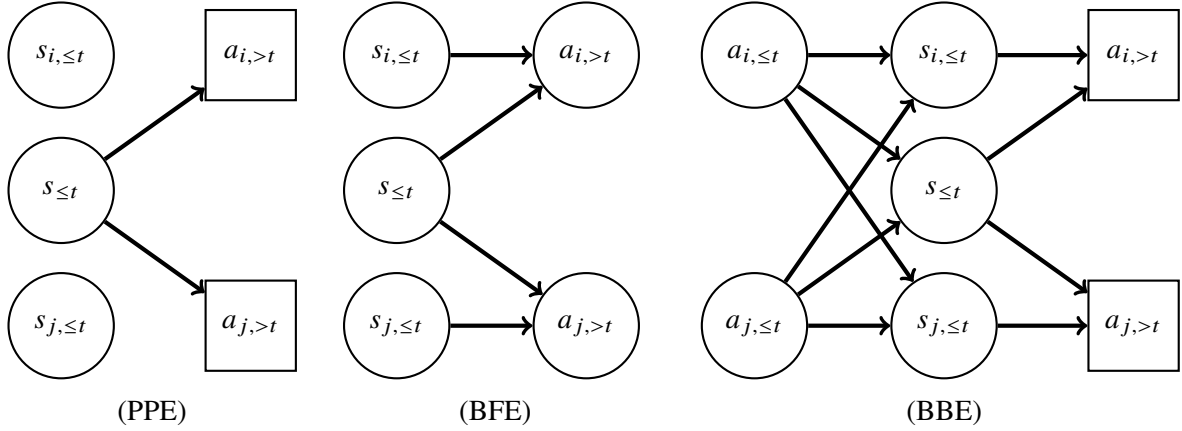

\subsection{Public Perfect Equilibria}
Let $\mathcal{K}_t:=\bigcap_{i\in I}\mathcal{H}_{i,t}$ be the common knowledge filtration. A \textbf{perfect public equilibrium} (\textbf{PPE}) is then a perfect bayesian equilibrium (PBE) such that all strategies are $\mathcal{K}_t$-adapted. An example is the Public Proportional equilibrium, while the Atonement equilibrium is not a PPE, since agents condition on their own previous action $a_{i,t-1}$ which cannot be deduced by other agents (when $N>2$).

Let $\Sigma_{i,t}\subseteq \mathcal{H}_{i,t}$ be the $\sigma$-algebra of information available to agent $i$ at time $t$ that is relevant to their strategy, likewise let $\Sigma_{-i,t}:=\bigvee_{j\neq i}\Sigma_{j,t}$ be the $\sigma$-algebra relevant to other agents.\footnote{Formally, $\Sigma_{i,t}:=\mathcal{H}_{i,t}\cap \sigma(\hat{a}_{i})$, where $\sigma(\hat{a}_{i})$ is the smallest $\sigma$-algebra such that $\hat{a}_{i,t}$ is $\sigma(\hat{a}_{i})$-measurable for all $t$. The notation $\bigvee_{j} \Sigma_{j}$ is defined as $\sigma(\bigcup_{j} \Sigma_j)$, ie. the smallest $\sigma$-algebra that contains every $\Sigma_j$.}

Our main corollary is that private information has \textit{no} value in pure equilibria:
\begin{cor}[The Irrelevance of Private Information]\label{cor:pub.mon}
    Consider a repeated game with prevalent private preferences.
    \begin{enumerate}
        \item With noisy public+private monitoring, any pure PBE is a PPE.
        \item In any PPE $\Sigma_{i,t}\subseteq \Sigma_{-i,t}$.
        \item With noisy private monitoring, any outcome that occurs w.p. 1 in a given period $t$ is stage-Nash.
    \end{enumerate}
\end{cor}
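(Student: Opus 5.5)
All three parts come from one engine: the preceding corollary, which is Proposition \ref{prop:semi.gen}(a) applied period by period. At each $(i,t)$, the transient stage idiosyncrasy $\omega^T_t$ enters $u^s_{i,t}$ as a prevalent private utility. This holds conditional on a.e. $\omega^P_t$ and on the continuation term $\delta\E{u^c_{i,t+1}}$, which is independent of $\omega_t^T$. So the action set $A_{i,t}$ plays the role of $\A_0$, and the continuation value plays the role of $\hat{c}$. The information $s_{i,\le t}$ plays the role of the state, but only after we partial out the part of it that predicts $\hat{c}$.

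The precise reduction is the following. Fix the other players' strategies. Agent $i$'s expected payoff from $a_{i,t}$ depends on $s_{i,\le t}$ only through the conditional law of $(a_{-i,\ge t},s_{-i})$ given $\mathcal{H}_{i,t}$. Inside any cell on which this conditional law is constant, $s_{i,\le t}$ is a pure sunspot. Proposition \ref{prop:semi.gen}(a) then forces $\hat a_{i,t}$ to be constant in $s_{i,\le t}$ within the cell, for a.e. $\omega^T_t$. Hence $\Sigma_{i,t}$ is generated, up to null sets, by $i$'s beliefs about $\Sigma_{-i,\ge t}$-measurable objects.

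\textbf{Part 2.} In a PPE every $\hat a_{j}$ is $\mathcal{K}$-adapted, so $\Sigma_{-i,t}\subseteq\mathcal{K}_t$. Agent $i$'s own relevant information must be informative about others' strategies, i.e.\ about $\Sigma_{-i,\ge t}$. I would argue that because $\mathcal{K}_t$ is known to $i$, the only payoff-relevant content of $\mathcal{H}_{i,t}$ is already in $\mathcal{K}_t$. The conditioning of others' future play on future public signals matters to $i$ only through $\mathcal{K}_t$ and $i$'s current action. Combined with the reduction, this gives $\Sigma_{i,t}\subseteq\Sigma_{-i,t}$ (possibly after enlarging $\Sigma_{-i,t}$ by continuation relevance). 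This is the step I find least clean. I would first try to show it for $\Sigma_{-i,\ge t}$ and then use adaptedness to pull it back to $t$.

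\textbf{Part 1.} I would proceed by induction on $t$. Suppose all strategies are pure and $\mathcal{K}_\tau$-adapted for $\tau<t$, and consider period $t$.
\begin{enumerate}
\item Pure strategies do not depend on $\omega$, so past actions are deterministic functions of public histories.
\item With noisy monitoring, the private components of $s_{j,<t}$ then have full-support distributions given the public history. They are conditionally independent across players given past actions, which are public-measurable.
\item Therefore $i$'s private signals carry no information about $s_{-i}$ beyond $\mathcal{K}_t$, and hence, given purity, none about $\hat a_{-i,\ge t}$.
\item The reduction now makes $\hat a_{i,t}$ $\mathcal{K}_t$-measurable.
\end{enumerate}
The subtlety is that future strategies of others might depend on their private signals. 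To handle this, I would run a backward-forward fixed point: if $j$'s strategy depends on private information that $i$ cannot predict, then by the same argument applied to $j$ it cannot be optimal. Noisiness rules out off-path inference that would otherwise let deviations create correlation.

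\textbf{Part 3.} Suppose outcome $a^*_t$ occurs with probability one at $t$. Noisy private monitoring makes $\{G(a_{-i};a_i)\}$ mutually absolutely continuous. Deviations by $i$ are then undetectable by others in the sense of changing only the distribution of $s_{-i}$, which is independent of $s_i$. By the reduction, others' future play depends on their own signals only insofar as these predict $i$'s information. Conditional independence plus deterministic on-path actions mean their signals predict nothing about $i$'s state, so their continuation play is uninformative, i.e.\ constant across their signal realizations. Then $\hat c$ is constant in $a_{i,t}$, and $a^*_{i,t}$ must be a stage best response for a.e.\ type. So $a^*_t$ is stage-Nash.

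The main obstacle throughout is making the recursive ``informative of others' strategies'' clause rigorous. The clause is circular across players, and my plan to break the circularity is an induction on periods together with purity or probability-one outcomes.
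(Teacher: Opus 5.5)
Your proposal follows the paper's proof. In each part you exhibit a private sunspot and then invoke Proposition~\ref{prop:semi.gen}(a), as the paper does: in Part 1 the private signal components, under purity with noisy, conditionally independent monitoring; in Part 2 distinctions between public paths that leave every other player's continuation strategy unchanged; in Part 3 the period-$t$ private signals when $a_t$ is deterministic. For Part 2, the cell argument in your reduction is the paper's whole argument: two public paths that give all $j\neq i$ the same continuation strategy form a sunspot for $i$, with $\Sigma_{-i,t}$ implicitly read as the distinctions those continuation strategies draw, so your ``continuation relevance'' caveat is the paper's reading and no pull-back via adaptedness is needed.
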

The first statement says that if we restrict attention to pure strategy equilibria, we are restricted to perfect public equilibria.

The second statement says that even in PPE, agents can only react to information that other agents also react to. Thus even with perfect monitoring (where SPE and PPE coincide), the Proportional Response equilibrium cannot be implemented, as each agent $i$ conditions their action on a distinct signal $a_{\pi(i),t}$. This doesn't apply to the Atonement equilibrium, as there agent $i$ conditions their strategy on $s_t^p,a_{i,t}$ where $a_{i,t}=s_t^p-\sum_{j\neq i}a_{j,t}$ can be deduced from the information relevant to other agents' strategies.

The last statement says that even if we consider mixed continuation strategies, the only attainable, pure stage-outcomes with private monitoring are stage-Nash.

\begin{proof}[Proof of Corollary \ref{cor:pub.mon}]
    1. Consider two histories that differ in private signal but not public signal. Since strategies are pure (and signalling is noisy), the signal is not informative about others' actions the previous period, and because signals are private, this means the signal is not informative about others' signals. Thus an agent's belief about others' strategies is necessarily independent of the private signal they observe, which is then a private sunspot.

    2. Similarly, if two public signal paths $s_{\le t},s_{\le t}'$ result in the same strategy for agents $j\neq i$, then $\{s_{\le t},s_{\le t}'\}$ is effectively a private sunspot for agent $i$: if they conditioned their strategy on the realization of these two paths, a designer could use these sunspot-independent strategies of other agents to induce an agent to condition on the private sunspot.  

    3. If agents always take an action $a_t$ w.p. 1 in period $t$ in a setting with noisy private monitoring, then $s_{i,t}$ will be independent across agents and consequently a private sunspot. Applying Proposition \ref{prop:semi.gen}, agents cannot condition their future strategy on $s_{i,t}$, and $a_{i,t}$ must be myopically optimal for each agent, resulting in a stage-Nash equilibrium.
\end{proof}


    

This justifies PPE in one sense, but PPE also often have unsatisfying properties: agents cannot react to private information that incriminates others, and in the absence of an informative public signal the only PPE are stage-Nash. 

Moreover, in the case of anonymous monitoring it becomes more difficult to obtain equilibria that have atonement or are renegotiation proof. For the following definition, suppose monitoring is public so $s_{i,t}=(s_t^p,a_{i,t})$ and that continuation payoffs for period $t$ can be written as an expectation over the signals observed at the start of period $t+1$.
\begin{defn}
    Fix an equilibrium, and let $v^c_{i,t}:S\rightarrow\mathbb{R}$ be the expected continuation payoff of player $i$ in period $t$ conditional on their information $\mathcal{H}_{i,t+1}$ obtained at the end of the period. The equilibrium features \textbf{atonement} if, holding the public signal fixed, an agent obtains a worse continuation payoff if they chose a myopically better action in period $t$. Formally, if $a'_{k,t'}=a_{k,t'}$ for all $t'\le t$, $k\neq i$, and $u_{i,t}^s(a'_{t})>u_{i,t}^s(a_{t})$ then
    $$
    v_{i,t}^c(a',s^p)\le v_{i,t}^c(a,s^p)\quad \text{for all $s^p$, strict at least once.}
    $$
    The equilibrium is (weakly) \textbf{renegotiation proof} at period $t$ if there are no two public signal paths $s^p,\tilde{s}^p$ such that
    $$
    v_{i,t}^c(a,\tilde{s}^p)\le v_{i,t}^c(a,s^p) \quad\text{for all $i,a$, strict at least once.}
    $$
\end{defn}

Atonement is a way for a guilty party to consent to punishment, which can then occur more immediately and over fewer periods, allowing trust to be re-established quickly after a defection (this is what allows our Atonement equilibrium (eq. \ref{eq:atone}) to have one period punishments with higher discount factors than our Public Proportional equilibrium). This is particularly desirable in settings where there might be miscoordination and learning of equilibria.

The idea behind renegotiation proof equilibria (developed by \cite{FM89}) is that punishment threats may lose credibility if they make everyone worse off: players would unanimously prefer to renegotiate the path of play to the pareto-improving path induced by the signal $s^p_{\le t}$.


\begin{fact}\label{prop:reneg}
    PPE cannot feature atonement.
    
    With anonymous public monitoring the only symmetric renegotiation proof PPE are stage-Nash.
\end{fact}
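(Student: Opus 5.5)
The plan is to prove both parts of the Observation from the defining property of a PPE: strategies are $\mathcal{K}_t$-adapted, so every agent's future play, and hence every continuation payoff $v^c_{i,t}$, is measurable with respect to the public history $s^p_{\le t}$ alone.

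For the first claim (no atonement), fix a PPE and two action profiles $a,a'$ that differ only in agent $i$'s period-$t$ action. Each $v^c_{i,t}(\cdot,s^p)$ depends on the own action $a_{i,t}$ only through the public signal path, because the continuation strategies of all players are functions of $s^p_{\le t+1}$. Hence $v^c_{i,t}(a',s^p)=v^c_{i,t}(a,s^p)$ for every $s^p$. This contradicts the strict inequality required by the definition of atonement.

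In the model with prevalent private preferences, the paper defines PPE as $\mathcal{K}_t$-adapted PBE. If one instead allows agent $i$ to condition on their own past action, Corollary \ref{cor:pub.mon}(2) gives $\Sigma_{i,t}\subseteq\Sigma_{-i,t}$. So anything $i$ reacts to is also reacted to by others. With public monitoring, the information relevant to others is public, and $a_{i,t}$ is not deducible from it except through $s^p$. This collapses the dependence to $s^p$ again. I will state the argument at the level of the definition and then remark on this extension.

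For the second claim, take a symmetric PPE under anonymous public monitoring. By symmetry and anonymity, continuation payoffs after public path $s^p$ are equal across players, $v^c_{i,t}(s^p)=w_t(s^p)$ for all $i$. Among the reachable public paths, weak renegotiation proofness forbids any $s^p,\tilde s^p$ with $w_t(\tilde s^p)<w_t(s^p)$, since that would be a unanimous strict ranking. So $w_t$ is constant in $s^p$.

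With constant continuation payoffs, current actions cannot affect continuation values: by the first part, the continuation depends on $a_{i,t}$ only through $s^p$. Each $a_{i,t}$ must therefore be myopically optimal, mirroring Corollary \ref{cor:pub.mon}(3). Hence the period-$t$ outcome is stage-Nash, and applying this at every $t$ gives the result.

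The main obstacle is the symmetry step. I need that a symmetric strategy profile under anonymous monitoring gives identical continuation values to all players after every public path, including off-path realizations. I expect this to require invoking permutation invariance of $G_t$ together with symmetry of the stage payoffs, which I would assume as part of "symmetric."

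A secondary subtlety is that noise may make some paths reachable only with probability zero. I would restrict the renegotiation comparison to paths in the support, which noisy monitoring guarantees is all of them.
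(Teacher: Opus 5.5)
Your argument follows the route the paper indicates, and both parts are correct in outline. The paper gives no formal proof; it only remarks that in a PPE a guilty and an innocent agent must act identically given the public signal, so under anonymity and symmetry punishment is collective and hence Pareto-dominated. Your Part 1 is that remark. Your Part 2 chain makes the intended argument explicit: a common continuation value $w_t$, renegotiation-proofness forcing $w_t$ to be constant, and hence myopic optimality and stage-Nash play.

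One step has to change: restricting the renegotiation comparison to reachable paths. The statement does not assume noisy monitoring; the paper's own Atonement example uses deterministic anonymous monitoring $s^p_t=\sum_j a_{j,t}$. Without noise, the public signals produced by a unilateral deviation can be off-path, yet those are exactly the continuation values that enter the period-$t$ incentive constraint. Restricted to reachable paths, renegotiation-proofness then becomes vacuous. For example, symmetric grim trigger in the public goods game (contribute $1$ until some $s^p_t<N$, then $0$ forever) passes your restricted test, although it is not stage-Nash. The paper's definition quantifies over all public signal paths, as in Farrell and Maskin's weak renegotiation-proofness, so use it as stated. Then $w_t$ is constant across all period-$t$ public paths, on or off path, and no noise assumption is needed.

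Two smaller points. First, drop the ``extension'' paragraph in Part 1. Corollary~\ref{cor:pub.mon}(2) is a statement about PPE, so it cannot be invoked for strategies that condition on own past actions. Also, the claim that $a_{i,t}$ is not deducible from $\Sigma_{-i,t}$ is false in general: the paper points out that in the Atonement equilibrium $a_{i,t}=s^p_t-\sum_{j\neq i}a_{j,t}$ is deducible from others' strategy-relevant information. Second, the symmetry step is easier than you expect. Read ``symmetric'' as all players using the same map from public histories to (type-contingent) actions, and add symmetric stage payoffs as you propose. Then every player gets the same expected stage payoff at every public history, so continuation values coincide at every public history, on or off path, with no further appeal to permutation invariance of $G_t$.
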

Asymmetric renegotiation proof equilibria either randomize who gets punished, or feature agents taking turns to `free-ride' each period, rewarding them with more free-riding if the other agents falter. Both of these alterations come at a significant cost to efficiency.

\subsection{Belief-Free Equilibria}
The Proportional Response equilibria generated in Example \ref{ex:pub.good} is a special case of a Belief Free Equilibrium. These are equilibria where agents are willing to obey the strategy prescribed by their signals regardless of which signals other agents have seen. 

Denote $\hat{a}_i(s_{\le t})$ to be the restriction of the strategy $\hat{a}_i$ to sequences of observations that follow $s_{i,\le t}$.
\begin{defn}
    A \textbf{$T$-block belief-free equilibrium} is a profile of strategies $\hat{a}$ and a set of periods $T\subseteq \mathbb{T}$ such that for any period $t\in T$ and any profile of signal paths $(s_{i,\le t})_{i\in \mathbb{I}}$, the strategy $\hat{a}_i(s_{i,\le t})$ is a best response to $\hat{a}_{-i}(s_{-i,\le t})$.
\end{defn}
The set $T$ describes the initial periods of `blocks' where individuals may have strict best responses depending on their signal within a block, but their strategy at the beginning of each block their best response is independent of others' signals (thus the beliefs about others' signals). 

If $T=\mathbb{T}$ then every period is its own block, describing the original \textit{belief-free equilibria} analyzed by \cite{EHO05}. While the Proportional Response equilibrium (eq. \ref{eq:prop.resp}) is belief free, the Atonement equilibrium is \textit{not} belief free: full contribution is the prescribed action when one is the only defector the previous period, but is sub-optimal if there was a defector other than oneself.

Note that the solution concept of belief-free equilibrium itself invokes $s_{i,\le t}$ as a private sunspot: in our basic mechanism problem, a designer could use a continuation payoff map generated by any $s_{-i,\le t}$ to incentivize agents to condition their strategy on $s_{i,\le t}$.
\begin{cor}\label{cor:BFE}
    With prevalent private preferences, $T$-block belief-free equilibria involve stage-Nash outcomes for every period in $T$.
\end{cor}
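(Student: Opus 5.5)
The plan is to reduce Corollary \ref{cor:BFE} to Proposition \ref{prop:semi.gen}(a). The key observation is that the belief-free property turns each agent's own signal path $s_{i,\le t}$ into a private sunspot at every block start $t\in T$.

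First fix a block start $t\in T$ and an agent $i$. Also fix an arbitrary profile of the other agents' signal paths $s_{-i,\le t}$. The belief-free condition says that for every $s_{i,\le t}$, the continuation strategy $\hat{a}_i(s_{i,\le t})$ is a best response to the \emph{fixed} continuation profile $\hat{a}_{-i}(s_{-i,\le t})$. This defines a mechanism in the sense of the base model. The agent is $i$ and the state is $s:=s_{i,\le t}$. The action is $a_{i,t}$, and the agent's type includes $\omega_t^T$. The consequence map $\hat{c}$ sends the current action to the expected continuation payoff, which is generated by $\hat{a}_{-i}(s_{-i,\le t})$ together with $i$'s own optimal continuation play.

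The crucial point is that $\hat{c}$ does not depend on $s_{i,\le t}$. Because the other agents' play is pinned at $\hat{a}_{-i}(s_{-i,\le t})$, the map $a_{i,t}\mapsto$ (stage payoff, continuation value) is the same function across all realizations of $i$'s own signals. So $s_{i,\le t}$ is payoff-irrelevant to $i$ in this auxiliary problem. It is also uninformative about the mechanism, since the mechanism is held fixed.

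Next I would invoke Proposition \ref{prop:semi.gen}(a). Decompose $i$'s period-$t$ utility as $(1-\delta)u^s_{i,t}(\cdot\rvert\omega_t)$ plus the continuation term. By assumption, $\omega^T_t\mapsto u^s_{i,t}$ is prevalent conditional on a.e.\ $\omega^P_t$. Since $\omega_t^T$ is drawn independently each period, it is independent of the continuation component $v_1$ (it is revealed only at $t$ and does not enter future periods). The conditional-prevalence clause of the proposition therefore applies.

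The proposition then implies that $i$'s period-$t$ action is uninformative about $s_{i,\le t}$. That is, for a.e.\ $\omega$ the action $\hat{a}_{i,t}$ does not vary with $i$'s signals. The continuation choices may still depend on $s_{i,\le t}$, but only through later periods. Consequently the distribution of $a_{i,t}$ is the same for every value of $s_{i,\le t}$.

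It remains to show that the period-$t$ outcome is stage-Nash. The first step is to show that the continuation consequence $\hat{c}(a_{i,t})$ cannot reward or punish $i$'s current action in a way that sustains a non-myopic choice. I would argue this by applying the same sunspot logic to the other agents. Their period-$>t$ strategies condition on $s_{j,\le t+1}$, which contains the signals about $a_{i,t}$. By the belief-free property at $t$, each $j$'s incentives at the start of the block hold for every realization of the others' signals. I would then use that, in equilibrium, the other agents' block-$t$ continuation must itself be a best response for all $s_{j}$, so the continuation map is pinned down independently of the private realizations. Given this, $i$'s continuation value is constant in $a_{i,t}$ up to the indifference already accounted for. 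Each $a_{i,t}$ is then a myopic best response to the period-$t$ distribution of $a_{-i,t}$, which by the previous paragraph is signal-independent. This makes the period-$t$ profile a stage-game equilibrium.

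I expect this final step to be the main obstacle: passing from ``period-$t$ actions ignore private signals'' to ``period-$t$ actions are myopically optimal.'' Within a block, continuation play can respond strictly to signals generated by $a_{i,t}$, so constant continuation payoffs are not automatic. My first attempt would mirror the proof of Corollary \ref{cor:pub.mon}.3. Once period-$t$ actions are signal-independent, the period-$t$ signals are (under noisy private monitoring) independent across agents. They then become private sunspots at the next block start, so no one's block-start strategy can condition on them. Iterating this to the next element of $T$ would kill the dependence of continuation values on $a_{i,t}$. If that recursion fails within long blocks, I would weaken the claim to actions at block starts being myopic best responses given that continuation play is belief-free, and state that as the operative content.
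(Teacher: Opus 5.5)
There is a genuine gap, and it sits in the step you flag yourself. It arises because you apply the sunspot argument to the wrong object. In the paper's timing, $s_{i,t}$ is the signal generated by period-$t$ play. You can see this in the Proportional Response strategy, where $\hat a_{i,t}$ responds to $s_{i,t-1}$, and in the proof of Corollary \ref{cor:pub.mon}.3, where $s_{i,t}$ is the signal about $a_t$. So $\hat a_i(s_{i,\le t})$ is $i$'s play from $t+1$ onward, and the block condition at $t\in T$ says this continuation is optimal against $\hat a_{-i}(s_{-i,\le t})$ for every $s_{-i,\le t}$.

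Your construction is exactly the paper's (one-line) argument: fix $s_{-i,\le t}$, treat the others' continuation as a fixed mechanism, note that $s_{i,\le t}$ is payoff-irrelevant and uninformative, and invoke Proposition \ref{prop:semi.gen}(a) with the transient shock. But it must be run on $a_{i,t+1}$, and then period by period on the rest of the continuation, not on $a_{i,t}$. Run that way, no agent's play after $t$ depends on $s_{\le t}$. This is the only channel through which $a_t$ affects the future, including $i$'s own memory of $a_{i,t}$. Hence $a_{i,t}$ enters only the stage payoff and is myopically optimal, so the period-$t$ outcome is stage-Nash with no further step needed.

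With your indexing, what you obtain is that block-start actions ignore pre-block histories. That does not yield the conclusion, and the bridge you sketch cannot be repaired:
\begin{itemize}
\item Your claim that others' continuation is ``pinned down independently of private realizations'' covers only realizations up to $t$, not the signals of $a_{i,t}$.
\item The recursion modelled on Corollary \ref{cor:pub.mon}.3 needs period-$t$ actions to be taken with probability one. Under prevalent private preferences, $a_{j,t}$ varies with $\omega_{j,t}$. So $s_{i,t}$ is informative about $a_{j,t}$, which $j$ remembers, and is not a private sunspot.
\item Even if the next block start ignored these signals, play strictly inside the block can respond to them.
\end{itemize}

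In fact the statement is false under your reading. Take blocks of length two. In the second period, agents play the good equilibrium of a coordination component if and only if they contributed in the first period and received a favourable signal about the other (with accurate enough signals). Block-start play depends only on current types, so the block condition holds trivially. Yet low-cost types contribute at the block start against their myopic interest. Your fallback of weakening the claim therefore does not prove the corollary; the fix is the re-indexing above.
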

This does say that belief free equilibria obtained with public preferences cannot be purified --- it may be possible to still approximate the equilibria using nearby strategies (necessarily involving private randomization) --- but the purifications will not satisfy the belief free property: agents will care about the beliefs of others.

\section{An Informal Analysis of Belief-Based Equilibria}\label{sec:disc}

In the previous section we find several negative properties of PPE in repeated games with imperfect monitoring, and show that these are all the equilibria that are obtained without (private) randomization. Adding private randomization helps by introducing \textit{reputation} to the setting, as past actions can be indicative of their future strategy, introducing informativeness to private signals. This is captured by belief-based equilibria (diagrammed in Figure \ref{fig:BBE}):

\begin{defn}
A \textbf{belief-based equilibrium} of a repeated game (with 2-players or public monitoring) is a PBE such that, in the first period where an agent does not play their myopic best response w.p. 1,
\begin{enumerate}
    \item the agent randomizes (ie. chooses different actions depending on $\omega$),
    \item the agent's action in that period affects their own future strategy.
\end{enumerate}
\end{defn} 
The first condition allows agents to make inferences about previous actions based on the observed signals, while (2) ensures these inferences are payoff-relevant through the inferred effect on others' strategies (otherwise signals remain a sunspot).\footnote{In multi-player games with private monitoring, these inferences may be payoff relevant through another channel, relaxing (2). In particular, they may contain information about a third party's signal, effectively serving to coordinate punishment when the first agent chooses a lower action, loosely ressembling an imperfectly correlated PPE.}
 
Note that these equilibria can be constructed with deterministic (or even perfect) monitoring and public preferences.\footnote{In this case, they can overlap with belief-free equilibria: in settings with binary actions and public preferences, any belief-based equilibrium where agents randomize after every history is necessarily belief-free. However, the structure of incentives will vary across histories (being distributed differently between stage- and continuation-payoffs) in such a way that induces agents with private preferences to discriminate between these histories.}

\cite{BO02} demonstrate a belief-based equilibrium with imperfect monitoring in a repeated prisoner's dilemma. A simple example in our public goods game with (deterministic) public monitoring is a strategy that each period randomizes between contributing $0$ (w.p. $\rho$) and following an atonement strategy (with proportionality $\alpha=\frac{\alpha_N}{1-\rho}$) which prescribes full contribution unless a low signal is observed and the agent did not deviate in the previous period, in which case $1-\alpha$ is prescribed. 

A full analysis of a belief-based equilibrium with both private preferences and noisy monitoring is beyond the scope of this paper, nevertheless I informally describe properties such an equilibrium should have, to illustrate how private preferences and noisy monitoring affect the possible equilibrium levels of randomization. Specifically, suppose preferences are private so that an agent's `liquidity shock' $\kappa$ is drawn from a distribution $F_\kappa$ with density over $[\underline{\kappa},\overline{\kappa}]$ independently each period.

In the proposed equilibrium, low signals harms parties' reputations. Adding private preferences, randomization can be attributed to liquidity shocks: agents whose $\kappa$ lies below a certain (possibly history-dependent) threshold $\kappa^\ast=F_\kappa^{-1}(\rho)$ are willing to sacrifice reputation for a larger payoff today, while those with higher $\kappa$ are willing to contribute the prescribed amount to maintain (or restore) a higher reputation.

To see the relation between heterogeneity in the private preference, randomization, and noisy monitoring, note that if another party failed to contribute the previous period (and a negative signal is realized) only contributions up to $1-\alpha$ are incentivized as the guilty party will make up the difference to restore the equilibrium. However, if the signal is a false negative, and there is no guilty party, contributions up to $1$ are incentivized. Agents at the threshold $\kappa^\ast$ will then have a strict preference to contribute $1-\alpha$ rather than $1$ whenever there is positive probability that an agent failed to contribute the previous period, but agents with higher $\kappa$ will prefer to contribute $1$ unless they have sufficiently high confidence in the negative signal. 

Higher confidence can be obtained by increasing the randomization $\rho$, increasing the prior that other agents are deviating. More randomization is required when 
\begin{enumerate}[label=(\arabic*)]
    \item monitoring is noisier, to counter-balance the increasing likelihood of false negatives;
    \item $\overline{\kappa}-\kappa^\ast$ is higher, to increase confidence in negative signals countering the increased willingness of high $\kappa$ types to deviate to 1.
\end{enumerate}
Recalling that $\rho=F_{\kappa}(\kappa^\ast)$, the second case describes how more uncertainty about $\kappa$ (ie. more disperse distributions $F_\kappa$) requires more randomness $\rho$.

This logic results in an intuitive effect: imprecise monitoring leads to more lenient punishments in expectation. Specifically, as monitoring becomes noisier/less precise, we need more randomization $\rho$ to maintain the `signal-to-noise ratio'. This requires an increase in the threshold $\kappa^\ast$ where agents are willing to defect, which can only be incentivized if the equilibrium involves more lenient punishments. This effect is contrary to many models without heterogeneity (e.g. belief free equilibria, mechanism design), where expected punishment is calibrated to generate indifference and thus does not vary with signal accuracy. In particular, the punishment conditional on a negative signal necessarily increases as the probability of generating a false positive increases.

Conversely, if monitoring becomes arbitrarily accurate (for a fixed distribution $F_\kappa$), less randomization $\rho$ is required: the threshold $\kappa^\ast$ may decrease, eventually approaching $\min\supp{F_\kappa}$. In this case, punishments become more severe, involving contributions close to $0$. The modal $\kappa$ may find these incentives extremely strict --- modelling the game from their perspective this equilibrium resembles a (trembling hand) SPE more than any noisy monitoring refinement.


\printbibliography

@article{M91,
title = {On the theory of repeated games with private information: Part I: anti-folk theorem without communication},
journal = {Economics Letters},
volume = {35},
number = {3},
pages = {253-256},
year = {1991},
issn = {0165-1765},
doi = {https://doi.org/10.1016/0165-1765(91)90139-C},
url = {https://www.sciencedirect.com/science/article/pii/016517659190139C},
author = {Hitoshi Matsushima}
}

@article{APS86,
title = {Optimal cartel equilibria with imperfect monitoring},
journal = {Journal of Economic Theory},
volume = {39},
number = {1},
pages = {251-269},
year = {1986},
issn = {0022-0531},
doi = {https://doi.org/10.1016/0022-0531(86)90028-1},
url = {https://www.sciencedirect.com/science/article/pii/0022053186900281},
author = {Dilip Abreu and David Pearce and Ennio Stacchetti}
}

@article{MM02,
title = {Repeated Games with Almost-Public Monitoring},
journal = {Journal of Economic Theory},
volume = {102},
number = {1},
pages = {189-228},
year = {2002},
issn = {0022-0531},
doi = {https://doi.org/10.1006/jeth.2001.2869},
url = {https://www.sciencedirect.com/science/article/pii/S0022053101928698},
author = {George J. Mailath and Stephen Morris}
}

@article{BO02,
title = {Belief-Based Equilibria in the Repeated Prisoners' Dilemma with Private Monitoring},
journal = {Journal of Economic Theory},
volume = {102},
number = {1},
pages = {40-69},
year = {2002},
issn = {0022-0531},
doi = {https://doi.org/10.1006/jeth.2001.2878},
url = {https://www.sciencedirect.com/science/article/pii/S0022053101928789},
author = {Bhaskar, Venkataraman and Ichiro Obara}
}

@inproceedings{CD23,
author = {Corrao, Roberto and Dai, Yifan},
title = {Mediated Communication with Transparent Motives},
year = {2023},
isbn = {9798400701047},
publisher = {Association for Computing Machinery},
address = {New York, NY, USA},
url = {https://doi.org/10.1145/3580507.3597808},
doi = {10.1145/3580507.3597808},
booktitle = {Proceedings of the 24th ACM Conference on Economics and Computation},
pages = {489},
numpages = {1},
location = {London, United Kingdom},
series = {EC '23}
}

@article{HL25,
title = {The optimality of (stochastic) veto delegation},
journal = {Games and Economic Behavior},
volume = {150},
pages = {215-234},
year = {2025},
issn = {0899-8256},
doi = {https://doi.org/10.1016/j.geb.2024.12.003},
url = {https://www.sciencedirect.com/science/article/pii/S089982562400191X},
author = {Xiaoxiao Hu and Haoran Lei}
}

@misc{Hunt92,
      title={Prevalence: a translation-invariant ``almost every'' on infinite-dimensional spaces}, 
      author={Brian R. Hunt},
      year={1992},
      eprint={math/9210220},
      archivePrefix={arXiv},
      primaryClass={math.FA},
      url={https://arxiv.org/abs/math/9210220}, 
}

@article{EHO05,
author = {Ely, Jeffrey C. and Hörner, Johannes and Olszewski, Wojciech},
title = {Belief-Free Equilibria in Repeated Games},
journal = {Econometrica},
volume = {73},
number = {2},
pages = {377-415},
doi = {https://doi.org/10.1111/j.1468-0262.2005.00583.x},
url = {https://onlinelibrary.wiley.com/doi/abs/10.1111/j.1468-0262.2005.00583.x},
eprint = {https://onlinelibrary.wiley.com/doi/pdf/10.1111/j.1468-0262.2005.00583.x},
year = {2005}
}

@article{H73,
	Author = {Harsanyi, John C. },
	Da = {1973/12/01},
	Id = {Harsanyi1973},
	Isbn = {1432-1270},
	Journal = {International Journal of Game Theory},
	Number = {1},
	Pages = {1--23},
	Title = {Games with randomly disturbed payoffs: A new rationale for mixed-strategy equilibrium points},
	Ty = {JOUR},
	Url = {https://doi.org/10.1007/BF01737554},
	Volume = {2},
	Year = {1973}}

@article{LR20,
author = {Lipnowski, Elliot and Ravid, Doron},
title = {Cheap Talk With Transparent Motives},
journal = {Econometrica},
volume = {88},
number = {4},
pages = {1631-1660},
url = {https://onlinelibrary.wiley.com/doi/abs/10.3982/ECTA15674},
year = {2020}
}

@article{CH10,
Author = {Chakraborty, Archishman and Harbaugh, Rick},
Title = {Persuasion by Cheap Talk},
Journal = {American Economic Review},
Volume = {100},
Number = {5},
Year = {2010},
Month = {12},
Pages = {2361-82},
URL = {https://www.aeaweb.org/articles?id=10.1257/aer.100.5.2361}}

@misc{SGGK23,
      title={Robust equilibria in cheap-talk games with fairly transparent motives}, 
      author={Jan-Henrik Steg and Elshan Garashli and Michael Greinecker and Christoph Kuzmics},
      year={2023},
      eprint={2309.04193},
      archivePrefix={arXiv},
      primaryClass={econ.TH}
}

@misc{AGS23,
      title={Informationally Robust Cheap-Talk}, 
      author={Itai Arieli and Ronen Gradwohl and Rann Smorodinsky},
      year={2023},
      eprint={2302.00281},
      archivePrefix={arXiv},
}

@article{DK21,
	Author = {Diehl, Christoph and Kuzmics, Christoph},
	Da = {2021/12/01},
	Doi = {10.1007/s00182-021-00774-0},
	Id = {Diehl2021},
	Isbn = {1432-1270},
	Journal = {International Journal of Game Theory},
	Number = {4},
	Pages = {911--925},
	Title = {The (non-)robustness of influential cheap talk equilibria when the sender's preferences are state independent},
	Ty = {JOUR},
	Url = {https://doi.org/10.1007/s00182-021-00774-0},
	Volume = {50},
	Year = {2021}}

@misc{B24,
      title={Robust Communication Between Parties with Nearly Independent Preferences}, 
      author={Alistair Barton},
      year={2024},
      eprint={2403.13983},
      archivePrefix={arXiv},
      primaryClass={econ.TH},
      url={https://arxiv.org/abs/2403.13983}, 
}

@techreport{BMM09,
title = {A Foundation for Markov Equilibria in Infinite Horizon Perfect Information Games
},
institution = {Penn Institute for Economic Research},
journal = {PIER Working Paper},
number = {09-029},
year = {2009},
doi = {https://doi.org/10.2139/ssrn.1460780},
url = {https://ssrn.com/abstract=1460780},
author = {Bhaskar, Venkataraman and Mailath, George J. and Morris, Stephen},
series  = {PIER Working Paper},
type    = {Working Paper},
number  = {09-029},
}

@article{AH03,
 ISSN = {00129682, 14680262},
 URL = {http://www.jstor.org/stable/1555534},
 author = {Robert J. Aumann and Sergiu Hart},
 journal = {Econometrica},
 number = {6},
 pages = {1619--1660},
 publisher = {[Wiley, Econometric Society]},
 title = {Long Cheap Talk},
 urldate = {2025-04-30},
 volume = {71},
 year = {2003}
}

@article{FM89,
title = {Renegotiation in repeated games},
journal = {Games and Economic Behavior},
volume = {1},
number = {4},
pages = {327-360},
year = {1989},
issn = {0899-8256},
doi = {https://doi.org/10.1016/0899-8256(89)90021-3},
url = {https://www.sciencedirect.com/science/article/pii/0899825689900213},
author = {Joseph Farrell and Eric Maskin}
}

@article{BMM08,
title = {Purification in the infinitely-repeated prisoners' dilemma},
journal = {Review of Economic Dynamics},
volume = {11},
number = {3},
pages = {515-528},
year = {2008},
issn = {1094-2025},
doi = {https://doi.org/10.1016/j.red.2007.10.004},
url = {https://www.sciencedirect.com/science/article/pii/S1094202507000701},
author = {Bhaskar, Venkataraman and Mailath, George J. and Morris, Stephen}
}

@book{Rockafellar,
author="R. Tyrell Rockafellar",
title="Convex Analysis",
year="1972",
publisher="Princeton University Press",
pages="246",
isbn="978-0-691-01586-6",}

@book{convex22,
author="Mordukhovich, Boris S.
and Mau Nam, Nguyen",
title="Convex Analysis and Beyond: Volume I: Basic Theory",
year="2022",
publisher="Springer International Publishing",
address="Cham",
pages="355",
isbn="978-3-030-94785-9",
doi="10.1007/978-3-030-94785-9_5",
url="https://doi.org/10.1007/978-3-030-94785-9_5"
}
\appendix\section{Proofs}

\begin{proof}[Proof of Corollary \ref{cor:m.b}]
        Suppose the customer commits to an `off-path' mechanism $\hat{a}'$. The contractor can then credibly threaten to only ever send a single message that maximizes their payoff, reducing the customer to (at most) their reservation utility $u^\ast_R$. Such a response can incentivize the customer to commit to any mechanism that obtains utility at least $u^\ast_R$.

        Let $A_c:=\supp{\hat{a}_s}$ be the set of actions that can be chosen by $\hat{a}_s$, and $\overline{\pi}:=\max_{a\in A_c}\pi(a)$. The set of undominated messages is then
        $$
        \M(A_c):=\left\{(m_1,m_2);m_2\le \overline{\pi}\right\}.
        $$
        A mechanism that makes the sender indifferent over $\M(A_c)$ involves choosing actions
        $$
        \hat{a}(m_1,m_2)\in \{a\in \Delta A_c;\pi(a)=m_2\}.
        $$
        This makes it possible for $\hat{a}$ to be surjective on $A_c$ and the contractor can then choose messages $\hat{m}(s\rvert\omega)$ so that the distribution of actions is $\int \hat{a}\circ\hat{m}(s\rvert\omega)\,d\mathbb{P}(\omega)=\hat{a}_s(s)$.
    \end{proof}

\section{Other communication games}\label{app:comm.g}

In \textbf{mediated cheap talk} the sender sends a cheap talk message $a$ revealing the state to a mediator who then designs an experiment $\eta\in\Delta\Delta S$ to reveal information to the receiver, thus $\A=S$. Define $\eta_a(\nu):=\eta(\nu)\frac{d\nu(a)}{d\mu(a)}$ to be the Bayes measure over posteriors $\nu$ when the state is claimed to be $a$. Our `designer' is an aggregate of the mediator and receiver's strategies, constrained by
\begin{align}
    \hat{c}(a)=&c^\ast_\#(\eta_a)& 
    c^\ast(\nu)\in& C^\ast(\nu)&
    \mu=&\int \nu \,d\eta(\nu)
\end{align}
where $c^\ast_\#(\eta_a)$ is the pushfordward of $\eta_a$ by $c^\ast$ (ie. the distribution of choices when beliefs $\nu$ are drawn from $\eta_a$ and choices made according to $c^\ast:\nu\mapsto c$), and $\mu\in \Delta S$ is the prior belief. $C^\ast(\nu)$ is the best response to posterior $\nu$, defined in eq. \ref{IC:cheaptalk}. The last equation describes the set of experiments $\eta$ satisfying Bayes' Law.

In \textbf{long cheap talk} games, there is a sequence of communication opportunities alternating with random public signals which may affect the interpretation of messages. This can be seen as a recursive version of our design problem where, if the receiver enters round $t$ with prior belief $\nu_{t-1}$ and the coordination device generates public signal $\lambda_t\in[0,1]$ the sender sends messages that induce the posterior $\nu_{t}$ with probability $\eta_{t}(\nu_{t}\rvert\nu_{t-1},\lambda_t)$, where $\nu_0=\mu$ is the prior. The belief $\nu_t$ then evolves according to a stochastic process (martingale), which we constrain to converge a.s. (with finite $S$ this is guaranteed by the martingale convergence theorem). Define $\eta_t^\infty(\nu\rvert\lambda)$ to be distribution of $\lim \nu_{\tau}$ conditional on $\nu_t=\nu$ and $\lambda_t=\lambda$. The designer in stage $t$ is then constrained by
\begin{align}\label{IC:longCT}
    \hat{c}(\nu_t\rvert\lambda_t)=&c^\ast_\#(\eta_t^\infty(\nu_t\rvert\lambda_t))& 
    c^\ast(\nu)\in& C^\ast(\nu)&
    \nu_{t-1}=&\int \nu_{t} \,d\eta_t(\nu_{t}\rvert \nu_{t-1},\lambda_t),
\end{align}
for all $\lambda_t,\nu_{t-1},t$. The fundamental distinction from mediated cheap talk is that the posterior beliefs at each stage are induced by cheap talk. Thus for every $t,\nu_t,\lambda$, the set of posteriors $\supp{\eta_t(\cdot\rvert\nu_t,\lambda)}$ must obtain the same utility, corresponding to eq. \ref{eq:Abuse.IC}. (See \cite{AH03} for a characterization of equilibria.)

With state-independent preferences, in terms of equilibrium mappings between states and receiver choices (using CT=cheap talk):
$$
\text{CT}\subseteq \text{Long CT}\subseteq \text{Mediated CT}\subseteq \text{CT with Receiver Commitment}
$$
where the first three subsets hold for state dependent preferences more generally. While \cite{LR20} show the first subset is an equality in terms of equilibrium sender payoffs in the state-independent preference case, that may fail to hold in nearby (purifiable) models.
\end{document}